\documentclass[journal]{IEEEtran}
\usepackage{graphicx}
\usepackage{amssymb}
\usepackage{amsthm}
\usepackage[noadjust]{cite}%cite
\usepackage{setspace}               % set line space
\usepackage{stfloats}
\usepackage{bbding}
\usepackage{amsthm}
\usepackage{amsmath}
\usepackage{flushend}
\usepackage{cases,subeqnarray}
\usepackage{bm,multirow,bigstrut}
\usepackage{textcomp}
\usepackage{latexsym,bm}
\usepackage{booktabs}
\usepackage{xcolor}
\usepackage{mathtools}
\usepackage{dsfont}
\usepackage{extarrows}
\usepackage{mathrsfs}
\usepackage{subfigure}
\usepackage{cite}
\usepackage{soul}
\usepackage{bm}
\usepackage{cleveref}
\usepackage{multicol}       % table
\usepackage{multirow}       % table
\usepackage{array}          % table
\usepackage{colortbl}
\usepackage{makecell}
\definecolor{crimson}{RGB}{192,0,0}         % color crimson
\definecolor{navy}{RGB}{47,85,151}         % color crimson

\makeatletter                               % algorithm
\newif\if@restonecol
\makeatother

\makeatletter
\newif\if@restonecol
\makeatother

\usepackage[linesnumbered,ruled,vlined]{algorithm2e}%[ruled,vlined]{
\usepackage{algpseudocode}
\usepackage{amsmath}
\usepackage{stfloats}
\usepackage{color}
\usepackage{placeins}
\usepackage{tabularx,colortbl}
\usepackage{amssymb}

\usepackage{listings}                       % Matlab code
\usepackage[framed,numbered,autolinebreaks,useliterate]{mcode}
\theoremstyle{plain}

\newtheorem{lemm}{Lemma}
\newtheorem{prop}{Proposition}

\theoremstyle{plain}
\newtheorem{rem}{Remark}

\IEEEoverridecommandlockouts

\usepackage[ruled,vlined]{algorithm2e}
\SetAlgoSkip{}            % 取消算法前后额外竖直间距
\SetAlgoInsideSkip{}      % 如仍觉得算法内部上下留白偏大，可一起关掉（可选）

\begin{document}

%----------------------------title&author&thanks----------------------------

\title{Sparse Activation for Sustainable Cell-Free Massive MIMO Networks: Less is More
\thanks{Z. Wang and E. Bj{\"o}rnson are with KTH Royal Institute of Technology, Stockholm, Sweden, and also with Digital Futures, Stockholm, Sweden (e-mail: zhewang2@kth.se, emilbjo@kth.se); S. Chen (corresponding author) is with Purple Mountain Laboratories, Nanjing, China, and with Southeast University, Nanjing, China (e-mail: shuaifeichen@seu.edu.cn).}}
% \thanks{S. Chen (corresponding author) is with Purple Mountain Laboratories, Nanjing, China, and with Southeast University, Nanjing, China (e-mail: shuaifeichen@seu.edu.cn).}}
\author{Zhe Wang,~\IEEEmembership{Member,~IEEE,} Shuaifei Chen,~\IEEEmembership{Member,~IEEE,} and Emil Bj{\"o}rnson,~\IEEEmembership{Fellow,~IEEE}\vspace*{-0.7cm}}
\maketitle

%----------------------------abstract----------------------------
\begin{abstract}

Motivated by the vision of making sixth-generation (6G) networks sustainable, we study the sparse antenna/array activation problems in uplink cell-free massive multiple-input multiple-output (CF mMIMO) networks. We first develop an antenna-level optimal bilinear equalizer (OBE) weighting framework, in which each access point-user equipment (AP-UE) pair is assigned a matrix-valued long-term weight to shape the contribution of individual antenna elements, thereby generalizing the conventional large-scale fading decoding (LSFD) strategy from scalar coefficients to antenna-element-aware weighting. Building on this structure, we formulate sparse antenna activation as structured sparsity-inducing mean square error (MSE) minimization problems, and design four activation schemes at two granularities: antenna-level and array-level, each with UE-specific and network-wide (all-UEs) variants. The resulting convex problems are solved efficiently via the proximal method with closed-form group-wise updates, while the network-wide schemes are modeled through hierarchical sparsity and handled by a tree-structured proximal operator. Numerical results under correlated Rician channels and a detailed power consumption model demonstrate that the OBE weighting scheme consistently improves spectral efficiency over the LSFD, with gains increasing with the number of antennas. Meanwhile, the studied sparse activation schemes can achieve substantial energy efficiency improvement and power reduction with controllable spectral efficiency loss.

\end{abstract}
%----------------------------keywords----------------------------
\begin{IEEEkeywords}
Cell-free massive MIMO, sparse optimization, energy efficiency, optimal bilinear equalizer weighting.
\end{IEEEkeywords}

%\newpage
\IEEEpeerreviewmaketitle

\section{Introduction}
\bstctlcite{BSTcontrol}
The transition from the fifth-generation (5G) to the sixth-generation (6G) wireless networks is envisioned to deliver not only unprecedented communication capabilities but also a paradigm shift toward green and sustainable communications, where resource-efficient network operation becomes a primary design objective \cite{ITU,6G-IA,10003083}. In particular, the IMT-2030 vision emphasizes energy-efficient and low-power technologies and advocates responsible resource usage under circular-economy principles, which jointly elevate sustainability from an auxiliary constraint to a fundamental requirement in 6G system design \cite{ITU}. Multiple-input multiple-output (MIMO) technology is a cornerstone technology aligned with this grand vision, as it enhances spectral efficiency and link reliability by exploiting spatial multiplexing and beamforming gains \cite{5595728,9113273,bjornson2019massive,11155198}. Notably, the past decade has witnessed a clear ``antenna abundance" evolution trend for MIMO, where the performance of wireless networks is continuously pushed forward by scaling up antenna arrays to massive numbers, to leverage larger spatial multiplexing gains. This evolution, while highly effective in improving capacity, also implies a growing hardware footprint and increasingly complex signal processing, which directly interacts with the sustainability objectives of future 6G systems.

On the other hand, future MIMO evolution can also be posed around antenna flexibility and adaptability, where antenna configurations and network topologies are dynamically reconfigured to match the propagation and traffic conditions, rather than continuously increasing the antenna count \cite{wang2025flexible}.
Among many representative next-generation MIMO technologies in \cite{wang2025flexible}, cell-free massive MIMO (CF mMIMO) can be viewed as a flexible-deployment topology, where many geographically distributed access points (APs) coherently serve users under central assistance from the central processing unit (CPU) \cite{7827017,cell-free-book,OBETrans,chen2025channel}. This architecture exploits an abundance of APs to achieve macro-diversity, but it can be harnessed by activating only a tailored subset of AP arrays and antennas depending on the current user and service demand, thereby facilitating energy-efficient and low-power consumption operations. A straightforward implementation of CF mMIMO assumes ``always-on, full cooperation", where all APs and antenna elements continuously participate in serving all user equipments (UEs). This is not necessarily green and sustainable, since the energy budget spent on circuit activity, baseband processing, and fronthaul/CPU overhead scales with the number of active antennas/APs, even when many links contribute negligible marginal rate improvements due to pathloss and spatial heterogeneity. 

For a given UE and service demand, the energy-efficient and scalable serving set is typically a small subset of APs, since involving far-away APs often yields marginal performance gains but incurs non-negligible processing and fronthaul/transport overhead \cite{9064545,10380322}. This principle is formalized in scalable CF mMIMO in \cite{9064545} via dynamic cooperation clusters, where each UE is served by a tailored subset of APs and many AP-UE links are switched off with minor performance loss. This naturally motivates moving from the conventional full activation principle in CF mMIMO to selective activation, where only a tailored subset of AP arrays or even antenna elements within each AP is kept active to serve each UE. Importantly, the key research question is to \emph{determine which AP arrays/antennas should be activated for which UE}, while meeting the intended per-UE spectral efficiency (SE) or rate requirements; otherwise, the most energy-minimizing solution would be to switch off the entire infrastructure. For instance, the authors in \cite{11264456} proposed and designed a promising joint pilot and data transmission framework for grant-free random access, which can effectively reduce the overhead for users joining in CF mMIMO networks and guarantee the system performance of the active users. The authors in \cite{11391509} delivered a scalable and sustainable vision in CF mMIMO networks, where each UE is connected with only a small optimized set of serving APs while achieving excellent performance based on the efficient mitigation of unnecessary interference. Moreover, the authors in \cite{10505156} studied fairness-oriented scheduling in scalable CF mMIMO networks, where the active-user set is dynamically selected to balance spectral efficiency and long-term user fairness under extremely heavy user load.

Building on this selective activation idea, some papers have started to design CF mMIMO networks with sparse activation/association to improve energy efficiency (EE). In particular, the authors in \cite{10113887} investigate the sparse large-scale fading decoding (LSFD) and large-scale fading precoding (LSFP) schemes, where the AP-UE serving links are selected through the sparsity-inducing mean square error (MSE) minimization so that each UE is supported by only a subset of APs with minor SE degradation and significant EE enhancement. These approaches are built on the LSFD-type two-layer processing, where each AP first forms a local soft estimate, and then, the CPU weights these estimates using per-AP-UE scalar long-term coefficients. Nevertheless, the scalar nature of the second-layer coefficient provides only AP-level processing, without the ability to customize the weighting across the multiple antenna elements within each AP. Consequently, these LSFD-based sparse designs mainly operate at the AP-level and cannot directly tailor the activation of individual antenna elements within an AP on a per-UE basis, thereby limiting fine-grained antenna/array activation and underutilizing the spatial processing potential offered by multi-antenna APs.
Thus, it is vital to develop an antenna-level weighting scheme that enables fine-grained, UE-specific activation of antenna elements and arrays, thereby improving energy efficiency and reducing overall power consumption while maintaining the intended SE performance.

Motivated by the above observations, in this paper, we first construct a novel antenna-level weighting scheme for CF mMIMO networks, and then, we present four antenna/array activation schemes with the aid of sparse optimization. By doing so, these schemes can facilitate the future green and sustainable wireless communication vision from the perspective of efficient antenna/array resource scheduling. The major contributions of this paper are listed as follows.   
\begin{itemize}
\item We develop an antenna-level optimal bilinear equalizer (OBE) weighting framework, where a matrix-valued long-term weight is introduced to each AP-UE pair to shape the contributions of the individual antenna elements. This generalizes the LSFD weighting scheme from scalar coefficients to matrix weights, thereby enabling antenna-element-specific weighting and fine-grained multi-antenna control.
\item Building upon the antenna-level weighting structure, we formulate antenna-level activation as sparsity-inducing MSE minimization problems that selectively deactivate particular antenna elements through structured regularization. We study both UE-specific activation and network-wide activation across all UEs, where the latter is characterized by a hierarchical sparsity model and solved efficiently via the tree-structured proximal method.
\item We further study the array-level sparse activation framework based on the sparsity-inducing MSE minimization. We consider both UE-wise array activation and network-wide array activation across all UEs, where the entire array can be selected and switched on/off accordingly. Numerical results validate the effectiveness of the studied sparse activation schemes on the EE enhancement and power consumption reduction and deliver many useful insights for the practical implementation of these schemes.
\end{itemize}

\textbf{\emph{Notation}}: Boldface lowercase letters $\mathbf{x}$ and boldface uppercase letters $\mathbf{X}$ denote vectors and matrices, respectively. The operators $\mathbb{E}\{\cdot\}$, $(\cdot)^T$, $(\cdot)^H$, $\mathrm{vec}(\cdot)$, $\mathrm{tr}(\cdot)$, $\Re(\cdot)$, and $\Im(\cdot)$ denote the expectation, transpose, conjugate transpose, vectorization, trace, real part, and imaginary part, respectively. $\mathbf{I}_N$ denotes the $N\times N$ identity matrix, $\otimes$ denotes the Kronecker product, and $\mathrm{diag}(\mathbf{X}_1,\ldots,\mathbf{X}_n)$ denotes a block-diagonal matrix with diagonal blocks $\mathbf{X}_1,\ldots,\mathbf{X}_n$. $\|\cdot\|_2$ denotes the Euclidean norm and $|\cdot|$ denotes the absolute value of a scalar.

\vspace{-0.5cm}

\section{System Model}

In this paper, we study a cell-free massive MIMO network, where $M$ APs equipped with uniform planar arrays (UPAs) serve $K$ single-antenna UEs in a coverage area. All APs are connected to the CPU via fronthaul links, and each AP UPA has $N=N_HN_V$ antennas, where $N_H$ and $N_V$ denote the number of antennas at the horizontal and vertical directions, respectively. Let $\mathbf{h}_{mk}\in \mathbb{C} ^N$ denote the channel response between the $m$-th array and the $k$-th UE, which is assumed to remain constant in time-frequency blocks that consist of $\tau_c$ symbols, based on the standard block fading model \cite{cell-free-book}. We assume that $\mathbf{h}_{mk}$ is independently and identically distributed across blocks, and $\mathbf{h}_{mk}$ for different array-UE pairs are also independent. We study the correlated Rician fading channel model, where $\mathbf{h}_{mk}$ can be modeled as $\mathbf{h}_{mk}=\overline{\mathbf{h}}_{mk}e^{j\theta _{mk}}+\check{\mathbf{h}}_{mk}$ with $\overline{\mathbf{h}}_{mk}\in \mathbb{C} ^N$ and $\check{\mathbf{h}}_{mk}\in \mathbb{C} ^N$ being the deterministic line-of-sight (LoS) and stochastic non-line-of-sight (NLoS) components, respectively. Moreover, $\theta _{mk}\sim \mathcal{U} ( -\pi ,\pi ]$ denotes the random phase-shift for the LoS component between AP $m$ and UE $k$. Then, we model the NLoS component as $\check{\mathbf{h}}_{mk}\sim \mathcal{N} _{\mathbb{C}}\left( \bm{0},\mathbf{R}_{mk} \right)$, where $\mathbf{R}_{mk} \in \mathbb{C} ^{N\times N} $ denotes the spatial correlation matrix between AP $m$ and UE $k$.

\begin{figure*}[t]
{{\begin{align}\tag{7}\label{SINR_k}
\mathrm{SINR}_k=\frac{p_k \left|\sum\limits_{m=1}^M\mathbb{E} \{\mathbf{v}_{mk}^{H}\mathbf{h}_{mk}\} \right|^2}{\sum\limits_{l=1}^K p_l\mathbb{E} \left\{ \left|\sum\limits_{m=1}^M\mathbf{v}_{mk}^{H}\mathbf{h}_{ml} \right|^2 \right\}-p_k \left|\sum\limits_{m=1}^M\mathbb{E} \{\mathbf{v}_{mk}^{H}\mathbf{h}_{mk}\} \right|^2+\sigma ^2\sum\limits_{m=1}^M\mathbb{E} \{\| \mathbf{v}_{mk}\|^2\}}
\end{align}}
\hrulefill
\vspace*{-0.1cm}
%\vspace*{3pt}
}\end{figure*}

During the channel estimation phase, we utilize $\tau _p$ orthogonal pilot signals. We define $\mathcal{P} _k$ as the subset of UEs that apply the same pilot signal as UE $k$ (including UE $k$ itself). Following the standard MMSE channel estimation methodology as in \cite{cell-free-book}, we can obtain the phase-aware MMSE estimate of $\mathbf{h}_{mk}\in \mathbb{C} ^N$ at AP $m$ as\footnote{``Phase-aware" means that the APs are assumed to know the instantaneous LoS phase-shifts within each coherence block. These phase-shifts remain constant over the frequency domain but vary randomly over the time domain at the pace of the small-scale fading. Hence, it is reasonable to treat them as blockwise known parameters for channel estimation.}
\vspace{-0.55em}\begin{equation}
\widehat{\mathbf{h}}_{mk}=\overline{\mathbf{h}}_{mk}e^{j\theta _{mk}}+\sqrt{p_{k,p}}\mathbf{R}_{mk}\mathbf{\Psi }_{mk}^{-1}\left( \mathbf{y}_{mk}^{p}-\overline{\mathbf{y}}_{mk}^{p} \right),
\vspace{-0.3em}\end{equation}
where $\mathbf{y}_{mk}^{p}=\sum_{l\in \mathcal{P} _k}{\sqrt{p_{l,p}}\tau _p\mathbf{h}_{ml}}+\mathbf{n}_{mk}^{p}$, $\overline{\mathbf{y}}_{mk}^{p}=\sum_{l\in \mathcal{P} _k}{\sqrt{p_{l,p}}\tau _p\overline{\mathbf{h}}_{ml}}e^{j\theta _{ml}}$, $\mathbf{n}_{mk}^{p}\sim \mathcal{N} _{\mathbb{C}}( \mathbf{0},\tau _p\sigma ^2\mathbf{I}_N ) $, and $
\mathbf{\Psi }_{mk}=\sum_{l\in \mathcal{P} _k}{p_{l,p}\tau _p\mathbf{R}}_{ml}+\sigma ^2\mathbf{I}_N$.  Moreover, $p_{k,p}$ denotes the transmit power of UE $k$ during the pilot transmission phase and $\mathbf{n}_{mk}^{p}$ denotes projected additional noise with $\sigma ^2$ being the noise power. Following this estimator, the channel estimation error $\tilde{\mathbf{h}}_{mk}=\mathbf{h}_{mk}-\widehat{\mathbf{h}}_{mk} $ is conditionally independent of the channel estimate with $\mathrm{Cov} \{ \tilde{\mathbf{h}}_{mk} |\theta _{mk}\} =\mathbf{C}_{mk}=\mathbf{R}_{mk}-p_{k,p}\tau _p\mathbf{R}_{mk}\mathbf{\Psi }_{mk}^{-1}\mathbf{R}_{mk}$. More derivation details for this phase-aware MMSE channel estimator can be found in \cite{OBETrans}.

\vspace{-0.4cm}

\section{OBE Weighting-Based Data Transmission}
In this section, we study the uplink data transmission, where a novel OBE weighting strategy is proposed to further enhance the achievable SE performance compared to the widely studied LSFD weighting strategy in \cite{cell-free-book}. During the data transmission phase, all UEs transmit $\tau_u=\tau_c-\tau_p$ data symbols, and the received data signal at AP $m$ can be represented by 
$\mathbf{y}_m=\sum_{k=1}^K{\mathbf{h}_{mk}x_k}+\mathbf{n}_m\in \mathbb{C} ^N,$
where $x_k\sim \mathcal{N} _{\mathbb{C}}(0,p_k)$ denotes the data symbol sent by UE $k$ with $p_k$ being the transmit power of UE $k$ during the data transmission phase, $\mathbf{n}_m\sim \mathcal{N} _{\mathbb{C}}( \mathbf{0},\sigma ^2\mathbf{I}_N ) $ represents the additive noise at AP $m$. 
Relying on the local decoding vector $\mathbf{v}_{mk}\in \mathbb{C} ^N$ for UE $k$, the $m$-th AP can locally derive the soft estimate of $x_k$ as \vspace{-0.55em}\begin{equation}\widehat{x}_{mk}=\mathbf{v}_{mk}^{H}\mathbf{y}_m\!=\!\mathbf{v}_{mk}^{H}\mathbf{h}_{mk}x_k\!\!+\!\!\sum_{l\ne k}^K{\mathbf{v}_{mk}^{H}\mathbf{h}_{ml}x_l\!}+\!\mathbf{v}_{mk}^{H}\mathbf{n}_m.\vspace{-0.3em}\end{equation} 

\vspace{-0.6cm}

\subsection{Conventional LSFD Weigthing Scheme}
In the conventional LSFD weighting scheme \cite{cell-free-book}, all arrays send their local soft estimates $\widehat{x}_{mk}$ to the CPU, where the CPU implement the LSFD weighting strategy to combine these soft estimates as $\bar{x}=\sum_{m=1}^M{a_{mk}^{*}\widehat{x}_{mk}}$ with $a_{mk}$ being the complex weighting scalar of AP $m$ tailored for UE $k$. The CPU can compute the optimal statistics-based LSFD vector for each UE based on the Rayleigh quotient maximization result as
\vspace{-0.55em}\begin{equation}\label{lsfd_eq}
\mathbf{a}_{k}^{*}\!\!=\!\!\left( \sum_{l=1}^K{p_l\mathbf{\Xi }_{kl}}-p_k\mathbb{E} \left\{ \mathbf{p}_{k} \right\} \mathbb{E} \left\{ \mathbf{p}_{k}^{H} \right\} +\sigma ^2\mathbf{Q}_k \right) ^{-1}\mathbb{E} \left\{ \mathbf{p}_{k} \right\},
\vspace{-0.3em}\end{equation}
where $\mathbf{a}_k=[ a_{1k},\dots ,a_{Mk} ] \in \mathbb{C} ^M$,  $\mathbf{Q}_k=\mathrm{diag}[ \mathbb{E} \{ \| \mathbf{v}_{1k} \| ^2 \} ,\dots ,\mathbb{E} \{ \| \mathbf{v}_{Mk} \| ^2 \} ] \in \mathbb{C} ^{M\times M}$, $\mathbf{p}_{k}=[ \mathbf{v}_{1k}^{H}\mathbf{h}_{1k},\dots ,\mathbf{v}_{Mk}^{H}\mathbf{h}_{Mk} ] ^T\in \mathbb{C} ^M$, and $\mathbf{\Xi }_{kl}\in \mathbb{C} ^{M\times M}$ with its $(m_1,m_2)$-th element being $[ \mathbf{\Xi }_{kl} ] _{m_1m_2}=\mathbb{E} \{( \mathbf{v}_{m_1k}^{H}\mathbf{h}_{m_1l} ) ^H( \mathbf{v}_{m_2k}^{H}\mathbf{h}_{m_2l} ) \}$. The optimal LSFD scheme as \eqref{lsfd_eq} can not only maximize the use-and-then-forget (UatF) capacity bound-measured achievable SE performance as shown in \cite[Eq. (9)]{10113887}, but also minimize $\mathrm{MSE}_{k,\mathrm{LSFD}}\!=\!\mathbb{E} \{|x_k-\bar{x}_k|^2\}$ as shown in \cite[Eq. (14)]{10113887}.

\vspace{-0.3cm}

\subsection{Proposed OBE Weigthing Scheme}
In this paper, we propose a more general weighting strategy, where a matrix-valued weighting scheme can be applied at each array for each UE, going beyond the scalar weighting strategy in the LSFD scheme. More specifically, we assume a weighted local decoding scheme $\mathbf{v}_{mk}$ constructed as 
\vspace{-0.55em}\begin{equation}\label{local_decoding_scheme}
\mathbf{v}_{mk}=\mathbf{W}_{mk}\overline{\mathbf{v}}_{mk},
\vspace{-0.3em}\end{equation}
where $\mathbf{W}_{mk}\in \mathbb{C} ^{N\times N}$ denotes the statistics-based weighting matrix at AP $m$ for UE $k$ and $\overline{\mathbf{v}}_{mk}\in \mathbb{C} ^N$ denotes the local instantaneous channel state information (CSI)-based combining vector at AP $m$ for UE $k$. Two widely applied local combining schemes are the L-MMSE combining 
\vspace{-0.55em}\begin{equation}\label{local_MMSE}
\begin{aligned}
\overline{\mathbf{v}}_{mk}=p_k\left( \sum_{l=1}^K{p_l\left( \widehat{\mathbf{h}}_{ml}\widehat{\mathbf{h}}_{ml}^{H}+\mathbf{C}_{ml} \right)}+\sigma ^2\mathbf{I}_N \right) ^{-1}\widehat{\mathbf{h}}_{mk},
\end{aligned}
\vspace{-0.3em}\end{equation}
and the local-maximum ratio (L-MR) combining $\overline{\mathbf{v}}_{mk}=\widehat{\mathbf{h}}_{mk}$. Then, all arrays transmit the decoding results to the CPU, and the final decoding result can be denoted as
\vspace{-0.55em}\begin{equation}\label{decoding_signal}
\check{x}_k\!\!=\!\!\!\sum_{m=1}^M{\!\hat{x}_{mk}}\!=\!\!\left( \sum_{m=1}^M{\mathbf{v}_{mk}^{H}\mathbf{h}_{mk}} \right) x_k+\sum_{l\ne k}^K{\sum_{m=1}^M{\mathbf{v}_{mk}^{H}\mathbf{h}_{ml}x_l}}+\mathbf{n}_{k}^{\prime}.
\vspace{-0.3em}\end{equation}
Based on \eqref{decoding_signal}, we can compute the achievable SE for UE $k$ as $\mathrm{SE}_k=\frac{\tau _u}{\tau _c}\log _2(1+\mathrm{SINR}_k)$ with $\mathrm{SINR}_k$ being the effective signal-to-interference-plus-noise ratio (SINR) for UE $k$, which can be computed based on the UatF bound as shown in \eqref{SINR_k} at the top of this page. Notably, when the local decoding scheme in \eqref{local_decoding_scheme} is applied, we can derive the optimal weighting strategy $\left\{ \mathbf{W}_{mk}:\forall m\right\} $, which can maximize the achievable SE for UE $k$ as follows. 

\begin{prop}\label{prop_obe_weighting}
The optimal weighting matrices $\left\{ \mathbf{W}_{mk}:\forall m\right\} $, which maximize the achievable SE, can be constructed as $\mathbf{W}_{mk}=\mathrm{vec}^{-1}( \mathbf{w}_{mk})$ with $\mathbf{w}_{mk}=[ \mathbf{w}_{k} ] _{( m-1 ) N^2+1:mN^2}\in \mathbb{C} ^{N^2}$. The optimal collective vectorized weighting scheme for UE $k$ $\mathbf{w}_k=\left[ \mathbf{w}_{1k}^{T},\mathbf{w}_{2k}^{T},\dots ,\mathbf{w}_{Mk}^{T} \right] ^T\in \mathbb{C} ^{MN^2}$ is computed as
\setcounter{equation}{7}
\vspace{-0.55em}\begin{equation}\label{opt_weighting}
\begin{aligned}
\mathbf{w}_{k}^{*}=\left( \sum_{l=1}^K{p_l\mathbb{E} \{\mathbf{z}_{kl}\mathbf{z}_{kl}^{H}\}} +\sigma ^2\mathbf{\Theta }_k\right) ^{-1}\!\!\mathbb{E} \{\mathbf{g}_k\}
\end{aligned}
\vspace{-0.3em}\end{equation}
with the maximized SINR being
$\mathrm{SINR}_k^{*}=p_k\mathbb{E} \{\mathbf{g}_k\}^H( \sum_{l=1}^K{p_l\mathbb{E} \{\mathbf{z}_{kl}\mathbf{z}_{kl}^{H}\}-p_k\mathbb{E} \{\mathbf{g}_k\}\mathbb{E} \{\mathbf{g}_k\}^H}+\sigma ^2\mathbf{\Theta }_k ) ^{-1}\mathbb{E} \{\mathbf{g}_k\}$, where $\mathbf{z}_{kl}=[\mathrm{vec(}\mathbf{h}_{1l}\overline{\mathbf{v}}_{1k}^{H})^T,\dots ,\mathrm{vec(}\mathbf{h}_{Ml}\overline{\mathbf{v}}_{Mk}^{H})^T]\in \mathbb{C} ^{MN^2}$, $\mathbf{\Theta }_k=\mathrm{diag}\{(\mathbb{E} \{\overline{\mathbf{v}}_{1k}\overline{\mathbf{v}}_{1k}^{H}\}^T\otimes \mathbf{I}_N),\dots ,(\mathbb{E} \{\overline{\mathbf{v}}_{Mk}\overline{\mathbf{v}}_{Mk}^{H}\}^T\otimes \mathbf{I}_N)\}\in \mathbb{C} ^{MN^2\times MN^2}$, and $\mathbf{g}_k=[\mathrm{vec(}\mathbf{h}_{1k}\overline{\mathbf{v}}_{1k}^{H})^T,\dots ,\mathrm{vec(}\mathbf{h}_{Mk}\overline{\mathbf{v}}_{Mk}^{H})^T]^T\in \mathbb{C} ^{MN^2}$.
\end{prop}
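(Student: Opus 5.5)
The plan is to rewrite the UatF SINR in \eqref{SINR_k} as a generalized Rayleigh quotient in the stacked vector $\mathbf{w}_k$, and then apply the standard Rayleigh-quotient maximization already used for LSFD in \eqref{lsfd_eq}.

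\textbf{Linearization step.} The key identity is $\mathrm{vec}(\mathbf{A}\mathbf{X}\mathbf{B})=(\mathbf{B}^T\otimes\mathbf{A})\mathrm{vec}(\mathbf{X})$, or more directly $\mathbf{x}^H\mathbf{W}\mathbf{y}=\mathrm{vec}(\mathbf{x}\mathbf{y}^H)^H\mathrm{vec}(\mathbf{W})$. With $\mathbf{v}_{mk}=\mathbf{W}_{mk}\overline{\mathbf{v}}_{mk}$ we have
\[
\mathbf{v}_{mk}^H\mathbf{h}_{ml}=\overline{\mathbf{v}}_{mk}^H\mathbf{W}_{mk}^H\mathbf{h}_{ml}=\bigl(\mathbf{h}_{ml}^H\mathbf{W}_{mk}\overline{\mathbf{v}}_{mk}\bigr)^*=\bigl(\mathrm{vec}(\mathbf{h}_{ml}\overline{\mathbf{v}}_{mk}^H)^H\mathbf{w}_{mk}\bigr)^*=\mathbf{w}_{mk}^H\mathrm{vec}(\mathbf{h}_{ml}\overline{\mathbf{v}}_{mk}^H).
\]
Summing over $m$ and stacking gives $\sum_m\mathbf{v}_{mk}^H\mathbf{h}_{ml}=\mathbf{w}_k^H\mathbf{z}_{kl}$, with $\mathbf{z}_{kk}=\mathbf{g}_k$. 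Since $\mathbf{W}_{mk}$ is deterministic, the terms of \eqref{SINR_k} become:
\begin{itemize}
\item the desired signal $p_k|\mathbf{w}_k^H\mathbb{E}\{\mathbf{g}_k\}|^2$;
\item the interference $\sum_l p_l\mathbf{w}_k^H\mathbb{E}\{\mathbf{z}_{kl}\mathbf{z}_{kl}^H\}\mathbf{w}_k$;
\item the noise $\sigma^2\sum_m\mathbb{E}\{\|\mathbf{W}_{mk}\overline{\mathbf{v}}_{mk}\|^2\}$.
\end{itemize}
For the noise term, I would write $\|\mathbf{W}_{mk}\overline{\mathbf{v}}_{mk}\|^2=\mathrm{tr}(\mathbf{W}_{mk}\overline{\mathbf{v}}_{mk}\overline{\mathbf{v}}_{mk}^H\mathbf{W}_{mk}^H)$. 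With $\mathbf{S}=\mathbb{E}\{\overline{\mathbf{v}}_{mk}\overline{\mathbf{v}}_{mk}^H\}$, the identity $\mathrm{tr}(\mathbf{W}\mathbf{S}\mathbf{W}^H)=\mathrm{vec}(\mathbf{W})^H(\mathbf{S}^T\otimes\mathbf{I}_N)\mathrm{vec}(\mathbf{W})$ holds because $\mathrm{vec}(\mathbf{W}\mathbf{S}^{1/2})=((\mathbf{S}^{1/2})^T\otimes\mathbf{I})\mathrm{vec}(\mathbf{W})$. This yields $\sigma^2\mathbf{w}_k^H\mathbf{\Theta}_k\mathbf{w}_k$ with the block-diagonal $\mathbf{\Theta}_k$ of the statement.

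\textbf{Maximization step.} The SINR now reads
\[
\mathrm{SINR}_k=\frac{p_k|\mathbf{w}_k^H\mathbf{b}|^2}{\mathbf{w}_k^H\mathbf{B}\mathbf{w}_k},\qquad \mathbf{b}=\mathbb{E}\{\mathbf{g}_k\},\quad \mathbf{B}=\sum_l p_l\mathbb{E}\{\mathbf{z}_{kl}\mathbf{z}_{kl}^H\}-p_k\mathbf{b}\mathbf{b}^H+\sigma^2\mathbf{\Theta}_k.
\]
Assuming $\mathbf{B}\succ0$, the generalized Rayleigh quotient (equivalently, Cauchy--Schwarz applied to $\mathbf{B}^{1/2}\mathbf{w}_k$ and $\mathbf{B}^{-1/2}\mathbf{b}$) gives the maximizer $\mathbf{w}_k\propto\mathbf{B}^{-1}\mathbf{b}$ and the maximum $p_k\mathbf{b}^H\mathbf{B}^{-1}\mathbf{b}$. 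This maximum is the stated $\mathrm{SINR}_k^*$. Since SE is monotone in SINR, the same $\mathbf{w}_k$ maximizes the achievable SE.

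To reach the form \eqref{opt_weighting}, which lacks the rank-one term, I would apply Sherman--Morrison. With $\mathbf{A}=\mathbf{B}+p_k\mathbf{b}\mathbf{b}^H$,
\[
\mathbf{B}^{-1}\mathbf{b}=\frac{\mathbf{A}^{-1}\mathbf{b}}{1-p_k\mathbf{b}^H\mathbf{A}^{-1}\mathbf{b}},
\]
so $\mathbf{A}^{-1}\mathbf{b}$ is parallel to $\mathbf{B}^{-1}\mathbf{b}$. Because the SINR is scale-invariant, $\mathbf{w}_k^*=\mathbf{A}^{-1}\mathbf{b}$ is also optimal. Reshaping each $N^2$-block with $\mathrm{vec}^{-1}$ then recovers $\mathbf{W}_{mk}$.

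\textbf{Expected obstacle.} The delicate points are the conjugation and ordering conventions in the vec identity, that is, confirming that the stacked vectors are exactly $\mathrm{vec}(\mathbf{h}_{ml}\overline{\mathbf{v}}_{mk}^H)$ and not their conjugates, and the justification of invertibility. For the latter I would argue that $\mathbf{B}\succeq\sigma^2\mathbf{\Theta}_k$, because $\sum_l p_l\mathbb{E}\{\mathbf{z}_{kl}\mathbf{z}_{kl}^H\}-p_k\mathbf{b}\mathbf{b}^H\succeq p_k\mathrm{Cov}(\mathbf{g}_k)\succeq0$. Then $\mathbf{B}\succ0$ whenever every $\mathbb{E}\{\overline{\mathbf{v}}_{mk}\overline{\mathbf{v}}_{mk}^H\}\succ0$, which I would state as a mild assumption; otherwise a pseudo-inverse on the relevant subspace would be used.
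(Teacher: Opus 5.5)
Your proposal is correct and follows essentially the same route as the paper's proof in Appendix~\ref{app_obe_weighting}. The paper likewise uses trace/vec identities to rewrite the signal, interference, and noise terms of \eqref{SINR_k} as $\mathbf{w}_k^H\mathbb{E}\{\mathbf{g}_k\}$, $\mathbf{w}_k^H\mathbb{E}\{\mathbf{z}_{kl}\mathbf{z}_{kl}^H\}\mathbf{w}_k$, and $\mathbf{w}_k^H\mathbf{\Theta}_k\mathbf{w}_k$, and then cites a textbook Rayleigh-quotient lemma plus a textbook argument for dropping the rank-one term; your Sherman--Morrison step does that last part explicitly, and your conjugation check and your argument that $\mathbf{B}\succeq\sigma^2\mathbf{\Theta}_k$ is invertible are details the paper leaves implicit.
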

\begin{IEEEproof}
The proof is provided in Appendix~\ref{app_obe_weighting}.
\end{IEEEproof}

\begin{rem}
Since this optimal weighting scheme follows from a similar idea to the OBE methodologies as in \cite{OBETrans,neumann2018bilinear}, we call it the OBE weighting scheme. The proposed OBE weighting scheme is a natural application of the conventional OBE methodology. Compared with \cite{OBETrans}, which applies the OBE principle to design the BE-structured combining schemes, that is $\overline{\mathbf{v}}_{mk}$ in the presented model as \eqref{local_decoding_scheme}, the present work instead applies it to optimize the weighting matrices $\mathbf{W}_{mk}$ for arbitrary local combining schemes $\overline{\mathbf{v}}_{mk}$. Moreover, $\mathbf{W}_{mk}$ is designed based on long-term statistical channel information rather than instantaneous CSI. Hence, it is a deterministic variable for each given set of scheduled users. Meanwhile, when $\mathbf{W}_{mk}=a_{mk}\mathbf{I}_N$, this matrix-valued OBE weighting scheme can reduce to the conventional scalar-valued LSFD weighting scheme, which can also be optimized as in \eqref{lsfd_eq}.
\end{rem}

The MSE of the final decoding data can be defined as $\mathrm{MSE}_k=\mathbb{E} \{ | x_k-\check{x}_k |^2  \}$, which is computed based on the channel statistics. It is insightful to observe that the OBE weighting scheme in \eqref{opt_weighting} can also minimize $\mathrm{MSE}_k$, which is proved in Appendix~\ref{MSE_Minimization}. Note that for arbitrary weighting schemes, we can construct $\mathrm{MSE}_k$ as
\vspace{-0.55em}\begin{equation}\label{MSE_k}
\mathrm{MSE}_k\!=\!\mathbb{E} \{ | x_k-\check{x}_k |^2 \} =\mathbf{w}_{k}^{H}\mathbf{\Gamma }_k\mathbf{w}_k-2p_k\Re ( \mathbf{w}_{k}^{H}\bm{\xi }_k ) +p_k,
\vspace{-0.3em}\end{equation}
where $\mathbf{\Gamma }_k=\sum_{l=1}^K{p_l\mathbb{E} \{ \mathbf{z}_{kl}\mathbf{z}_{kl}^{H} \}}+\sigma ^2\mathbf{\Theta }_k\in \mathbb{C} ^{MN^2\times MN^2}$ and $\bm{\xi }_k=\mathbb{E} \{ \mathbf{g}_k \} \in \mathbb{C} ^{MN^2}$. Then, we can also derive the total MSE for all $K$ UEs as
\vspace{-0.55em}\begin{equation}\label{total_MSE}
\sum_{k=1}^K{\mathrm{MSE}_k}=\mathbf{w}^H\mathbf{\Gamma w}-2\Re \left( \mathbf{w}^H\bm{\xi } \right) +\sum_{k=1}^K{p_k},
\vspace{-0.3em}\end{equation}
where $\mathbf{w}=[ \mathbf{w}_{1}^{T},\dots ,\mathbf{w}_{K}^{T} ] ^T\in \mathbb{C} ^{KMN^2}$ is the collective weighting vector for all UEs, $\bm{\xi }=[ p_1\bm{\xi }_{1}^{T},\dots ,p_K\bm{\xi }_{K}^{T} ] ^T\in \mathbb{C} ^{KMN^2}$, and $\mathbf{\Gamma }=\mathrm{diag}( \mathbf{\Gamma }_1,\dots \mathbf{\Gamma }_K ) \in \mathbb{C} ^{KMN^2\times KMN^2}$. We notice that finding the optimal $\mathbf{w}$, which minimizes the total MSE in \eqref{total_MSE}, is equivalent to finding the set of OBE weighting vectors $\{\mathbf{w}_{k}^{*}:\forall k\}$, that simultaneously minimize their respective MSEs in \eqref{MSE_k}. This is because $\mathrm{MSE}_k$ for different UEs are not affected by each other, which can be clearly showcased from \eqref{MSE_k}. These MSE-related formulations will be vital metrics in the following sparse optimization parts.

\vspace{-0.4cm}

\section{Power Consumption Model}\label{sec_power_model}
\vspace{-0.2cm}
To efficiently decrease the consumed power and enhance the energy efficiency while maintaining an acceptable data rate performance, it is important to utilize an accurate power consumption model, which can accurately depict the power consumption for different parts of systems. The power consumption model is also essential when we later develop sparsity optimization schemes, because these optimize different parts of the power consumption model. We provide a generalized power consumption model for the considered network that builds on \cite{10113887,bjornson2015optimal,8187178,ngo2017total}. It includes the following major components: 1) the power consumed at the array sites $\{ P_{m}^{\mathrm{ap}}: \forall m \} $; 2) the power consumed at the UEs $\{ P_{m}^{\mathrm{ue}}: \forall k \} $; 3) the power consumed at fronthaul connections between array sites and the CPU $\{P_{m}^{\mathrm{fh}}: \forall m \}$; and 4) the CPU power consumption $P^{\mathrm{cpu}}$. Thus, the total power consumption can be modeled as 
\vspace{-0.55em}\begin{equation}\label{total_power}
P=\sum_{k=1}^K{P_{k}^{\mathrm{ue}}}+\sum_{m=1}^M{P_{m}^{\mathrm{ap}}}+\sum_{m=1}^M{P_{m}^{\mathrm{fh}}}+P^{\mathrm{cpu}}.
\vspace{-0.3em}\end{equation}
We provide the modeling details below.

\textit{UE Power Consumption Model:} The consumed power at UE $k$ can be modeled as 
\vspace{-0.55em}\begin{equation}\label{power_ue}
P_{k}^{\mathrm{ue}}=P_{k}^{\mathrm{ue},\mathrm{c}}+\frac{\tau _pp_{k,p}+\tau _up_k}{\tau _c\eta _{\mathrm{ue}}},
\vspace{-0.3em}\end{equation}
where $P_{k}^{\mathrm{ue},\mathrm{c}}$ denotes the circuit power for UE $k$ and the second term denotes the consumed power of the uplink transmission with $0<\eta _{\mathrm{ue}}\leqslant 1$ being the power amplifier efficiency at UEs.

\textit{AP Power Consumption Model:} The power consumed at AP $m$ can be modeled as 
\vspace{-0.55em}\begin{equation}\label{power_ap}
P_{m}^{\mathrm{ap}}=B_mP_{m}^{\mathrm{array}}+\sum_{n=1}^N{A_{mn}P_{mn}^{\mathrm{ant},\mathrm{c}}}+\sum_{n=1}^N{\sum_{k=1}^K{a_{mnk}P_{mnk}^{\mathrm{pro}}}},
\vspace{-0.3em}\end{equation}
where $P_{m}^{\mathrm{array}}$ denotes the fixed array power for AP $m$ including the power for the site-cooling and control signaling \cite{bjornson2015optimal}, $P_{mn}^{\mathrm{ant},\mathrm{c}}$ denotes the internal circuit power for the $n$-th antenna of AP $m$, and $P_{mnk}^{\mathrm{pro}}$ represents the power consumption for the $n$-th antenna of AP $m$ for processing the received signal of UE $k$. $\{ B_m,A_{mn},a_{mnk} \} \in \{ 0,1 \} $ denote the binary symbols to describe the activating modes of AP $m$, the $n$-th antenna of AP $m$, and the $n$-th antenna of AP $m$ for UE $k$, respectively. Note that these symbols can be derived based on the value characteristics of $\{\mathbf{W}_{mk}: \forall m, \forall k\}$.
Specifically, $a_{mnk}=1$ if the $n$-th row of $\mathbf{W}_{mk}$ is nonzero, and $a_{mnk}=0$ otherwise. $A_{mn}=1$ if the $n$-th rows of $\mathbf{W}_{mk}$ are not all zero over all $k$, and $A_{mn}=0$ otherwise. Likewise, $B_m=1$ if at least one matrix $\mathbf{W}_{mk}$ is nonzero over all $k$, and $B_m=0$ otherwise.

\textit{Fronthaul Link Power Consumption Model:} The consumed power for the fronthaul link between AP $m$ and the CPU is modeled as 
\vspace{-0.55em}\begin{equation}\label{power_front}
P_{m}^{\mathrm{fh}}=B_mP_{m}^{\mathrm{fh},\mathrm{fix}}+\frac{\tau _u}{\tau _c} \sum_{k=1}^K{b_{mk}P_{mk}^{\mathrm{sig}}}  ,
\vspace{-0.3em}\end{equation}
where $P_{m}^{\mathrm{fh},\mathrm{fix}}$ is the fixed power consumed for the fronthaul link of AP $m$ and $P_{mk}^{\mathrm{sig}}$ denotes the required signaling power for UE $k$ for the fronthaul link of AP $m$.  
$b_{mk}\in \left\{ 0,1 \right\} $ denotes the activating mode of AP $m$ for UE $k$ and thus, $\sum_{k=1}^K{b_{mk}}$ denotes the number of UEs that AP $m$ serves.

\textit{CPU Power Consumption Model:} The power consumed at the CPU can be modeled as
\vspace{-0.55em}\begin{equation}\label{power_cpu}
P^{\mathrm{cpu}}=P^{\mathrm{cpu},\mathrm{fix}}+ \sum_{k=1}^K{\mathcal{B}\,\mathrm{SE}_kP^{\mathrm{cpu},\mathrm{dec}}},
\vspace{-0.3em}\end{equation}
where $P^{\mathrm{cpu},\mathrm{fix}}$ denotes the fixed power consumption at the CPU, $\mathcal{B} $ denotes the system bandwidth, and $P^{\mathrm{cpu},\mathrm{dec}}$ denotes the power consumption per bit for the decoding at the CPU.

Based on the achievable SE and the defined power consumption model, we can compute the total EE (in bit/Joule) for the studied network as \cite{10113887}
\vspace{-0.55em}\begin{equation}\label{ee}
\mathrm{EE}=  \sum_{k=1}^K \frac{\mathcal{B}\cdot\mathrm{SE}_k}{P}.
\vspace{-0.3em}\end{equation}

\vspace{-0.4cm}

\section{Antenna-Level Sparse Activation}\label{sec_antenna}
\vspace{-0.2cm}

To reduce the consumed power and enhance the EE, we can turn off some antenna/array-UE links based on the particular optimization goal. In this section, we focus on the antenna-level activation, where the sparse optimization methodologies are applied. More specifically, the antenna activation methodologies for one particular UE and for all UEs are proposed.

\vspace{-0.3cm}
\subsection{Problem Formulation}
Motivated by sparse reconstruction, we can formulate the generalized real-valued total MSE minimization problem 
\vspace{-0.55em}\begin{equation}\label{MSE_Sparse}
\underset{\underline{\bf{w}}\in \mathbb{R} ^{2KMN^2}}{\min}\underline{\bf{w}}^T\underline{\bf{\Gamma}}\underline{\bf{w}}-2\underline{\mathbf{w}}^T\underline{\bm{\xi}}+\Omega ( \underline{\mathbf{w}} ),
\vspace{-0.3em}\end{equation}
where $\underline{\mathbf{w}}=[ \underline{\mathbf{w}}_{1}^{T},\dots ,\underline{\mathbf{w}}_{K}^{T} ] ^T\in \mathbb{R} ^{2KMN^2}$, $\underline{\bm{\xi}}=[ p_1\underline{\bm{\xi}}_{1}^{T},\dots ,p_K\underline{\bm{\xi}}_{K}^{T} ] ^T\in \mathbb{R} ^{2KMN^2}$, and $\underline{\mathbf{\Gamma}}=\mathrm{diag}( \underline{\mathbf{\Gamma}}_1,\dots \underline{\mathbf{\Gamma}}_K ) \in \mathbb{R} ^{2KMN^2\times 2KMN^2}$ are all real-valued variables with 
\vspace{-0.55em}\begin{equation}\label{var_real}
\begin{aligned}
\underline{\mathbf{w}}_k &=
\left[ \begin{array}{c}
  \Re\left(\mathbf{w}_k\right) \\
  \Im\left(\mathbf{w}_k\right) \\
\end{array} \right] \in \mathbb{R}^{2MN^2},\quad
\underline{\boldsymbol{\xi}}_k =
\left[ \begin{array}{c}
  \Re\left(\boldsymbol{\xi}_k\right) \\
  \Im\left(\boldsymbol{\xi}_k\right) \\
\end{array} \right] \in \mathbb{R}^{2MN^2}
\\
\underline{\boldsymbol{\Gamma}}_k &=
\left[ \begin{array}{cc}
  \Re\left(\boldsymbol{\Gamma}_k\right) & -\Im\left(\boldsymbol{\Gamma}_k\right) \\
  \Im\left(\boldsymbol{\Gamma}_k\right) &  \Re\left(\boldsymbol{\Gamma}_k\right) \\
\end{array} \right] \in \mathbb{R}^{2MN^2\times 2MN^2}.
\end{aligned}
\vspace{-0.3em}\end{equation}
Note that $\underline{\bf{w}}^T\underline{\bf{\Gamma}}\underline{\bf{w}}-2\underline{\mathbf{w}}^T\underline{\bm{\xi}}$ is the MSE cost, which is a convex quadratic function of $\underline{\mathbf{w}}$. Meanwhile, $\Omega ( \underline{\mathbf{w}})$ represents an arbitrary sparsity-inducing function, which can encourage particular values in $\underline{\mathbf{w}}$ to be zero. By constructing different formulations of $\Omega ( \underline{\mathbf{w}})$, we can embrace different sparse optimization problems, where particular parts in $\underline{\mathbf{w}}$ will be forced to be $0$. Thus, we can formulate the antenna-level activation problem in this section and the array-level activation problem in the next section by constructing particular $\Omega ( \underline{\mathbf{w}})$ and solving corresponding MSE minimization problems.
\vspace*{-0.1cm}
\begin{rem}
The proposed sparse activation framework in the following operates at different granularities depending on the considered scheme. In the UE-specific antenna-level scheme in Sec.~\ref{sec_antenna_per_UE}, the activation state is defined for each antenna-UE serving link, such that a given antenna can be activated for some UEs and inactivated for others. In the network-wide antenna-level scheme for all UEs in Sec.~\ref{antenna_all_UE}, an antenna can be completely switched off over the whole UE set, while the remaining activated antennas are encouraged to serve only a limited number of UEs. In the UE-specific array-level scheme in Sec.~\ref{array_per_UE}, the activation state is defined for each array-UE serving link, meaning that all antennas within an array are jointly activated or inactivated for a particular UE. In the network-wide array-level scheme for all UEs in Sec.~\ref{array_all_UE}, some arrays can be entirely turned into sleep mode over the whole UE set, while the number of UEs served by each remaining activated array is controlled.    
\end{rem}

\vspace{-0.6cm}
\subsection{Antenna-Level Activation for Particular UE}\label{sec_antenna_per_UE}
In this part, we study the antenna activation for the particular UE, that is the UE-specific antenna activation, where the serving link between the antenna and the UE can be flexibly activated or not. The physical interpretation behind this antenna activation is to limit the average number of UEs that the array antennas provide service to. By doing this, the consumed signal processing power in \eqref{power_ap} as $\sum_{n=1}^N{\sum_{k=1}^K{a_{mnk}P_{mnk}^{\mathrm{pro}}}}$ can be effectively reduced by pushing some antenna-UE serving modes $a_{mnk}$ to be zero.\footnote{When introducing the characteristic of each sparse activation scheme, we only mention the directly affected parts of the power consumption model. The other power consumption parts may also have the potential to be reduced since the antenna/array activating modes in \eqref{total_power} are mutually coupled. For instance, if $\{a_{mnk}=0, \forall k\}$, we also have $A_{mn}=0$, and thus, the $n$-th antenna of AP $m$ will be turned off and the corresponding consumed power for this antenna will also be reduced, such as $A_{mn}P_{mn}^{\mathrm{ant},\mathrm{c}}$.}

We first take a look from the perspective of the weighted decoding scheme $\mathbf{v}_{mk}=\mathbf{W}_{mk}\overline{\mathbf{v}}_{mk}$. The $n$-th element of $\mathbf{v}_{mk}$ can be computed as $\mathbf{w}_{mkn}^{T}\overline{\mathbf{v}}_{mk}$, where $\mathbf{w}_{mkn}\in \mathbb{C} ^{N}$ denotes the $n$-th row of $\mathbf{W}_{mk}$ after transposition. We can observe that the activation mode between the $n$-th antenna for the $m$-th array and the $k$-th UE can be adjusted by $\mathbf{w}_{mkn}$, where $\mathbf{w}_{mkn}=\bf{0}$ denotes the scenario that the $n$-th antenna for the $m$-th array is inactivated for the $k$-th UE.
To facilitate the antenna activation for the particular UE, we define the sparsity-inducing function $\Omega ( \underline{\mathbf{w}})$ as
\vspace{-0.55em}\begin{equation}\label{sparse_term1}
\Omega \left( \underline{\mathbf{w}} \right) =\lambda \sum_{k=1}^K{\sum_{m=1}^M{\sum_{n=1}^N{\left\| \underline{\mathbf{w}}\left[ \underline{\mathcal{I}}_{mkn} \right] \right\| _2}}},
\vspace{-0.3em}\end{equation}
where $\lambda \geqslant 0$ is a tunable regularization factor, $\underline{\mathcal{I}}_{mkn}=\mathcal{I} _{mkn}\cup ( \mathcal{I} _{mkn}+KMN^2 ) $ represents the subset of element-level indexes of real-valued $n$-th row of $\mathbf{W}_{mk}$, that is $\underline{\mathbf{w}}_{mkn}=[ \Re ( \mathbf{w}_{mkn}^{T} ) ,\Im ( \mathbf{w}_{mkn}^{T} ) ] ^T\in \mathbb{R} ^{2N}$, with $\mathcal{I} _{mkn}=\{ ( k-1 ) MN^2+( m-1 ) N^2+n+( j-1 ) N: j=1,\dots ,N \} $ being the subset of element indexes of $\mathbf{w}_{mkn}$ in $\mathbf{w}$, and ``$+$" in $\underline{\mathcal{I}}_{mkn}$ denotes the operator that adds particular offset behind ``$+$" to all element indexs in the subset before ``$+$". One more intuitively equivalent formulation of \eqref{sparse_term1} is $\Omega \left( \underline{\mathbf{w}} \right) =\lambda \sum_{k=1}^K{\sum_{m=1}^M{\sum_{n=1}^N{\left\| \underline{\mathbf{w}}_{mkn} \right\| _2}}}$. By substituting \eqref{sparse_term1} into \eqref{MSE_Sparse}, we derive the MSE minimization problem
\vspace{-0.55em}\begin{equation}\label{sparse_problem_1}
\mathcal{P} ^{\mathrm{ant},( 1 )}\!\!:\!\!\underset{\underline{\mathbf{w}}\in \mathbb{R} ^{2KMN^2}}{\min}\!\underline{\mathbf{w}}^T\underline{\mathbf{\Gamma}}\underline{\mathbf{w}}-2\underline{\mathbf{w}}^T\underline{\bm{\xi}}\!+\!\lambda \!\sum_{k=1}^K\!{\sum_{m=1}^M\!{\sum_{n=1}^N{\!\| \underline{\mathbf{w}}[ \underline{\mathcal{I}}_{mkn} ] \| _2}}}.
\vspace{-0.3em}\end{equation}
Note that \eqref{sparse_problem_1} is a convex problem since \eqref{sparse_term1} is a standard $l_1/l_2$ group penalty, which is a convex function \cite{rish2014sparse}. Fortunately, we can divide the total MSE minimization problem in \eqref{sparse_problem_1} into $K$ subproblems since the MSEs for different UEs are not affected by each other and depend on different parts of $\underline{\mathbf{w}}$. Thus, we can formulate the $k$-th subproblem as
\vspace{-0.55em}\begin{equation}\label{sparse_problem_1_k}
\mathcal{P} _{k}^{\mathrm{ant},\left( 1 \right)}:\underset{\underline{\mathbf{w}}_k\in \mathbb{R} ^{2MN^2}}{\min}f\left( \underline{\mathbf{w}}_k \right) +\Omega \left( \underline{\mathbf{w}}_k \right),
\vspace{-0.3em}\end{equation}
where 
\vspace{-0.55em}\begin{equation}\label{f_k}
f\left( \underline{\mathbf{w}}_k \right) =\underline{\mathbf{w}}_{k}^{T}\underline{\mathbf{\Gamma}}_k\underline{\mathbf{w}}_k-2\underline{\mathbf{w}}_{k}^{T}\underline{\bm{\xi}}_k
\vspace{-0.3em}\end{equation} 
and $\Omega \left( \underline{\mathbf{w}}_k \right) =\lambda _k\sum_{m=1}^M{\sum_{n=1}^N{\left\| \underline{\mathbf{w}}_{mkn} \right\| _2}}$ with $\lambda _k$ being the regularization factor for the $k$-th subproblem. Similarly, we can derive the subsets of element indexes of $\mathbf{w}_{mkn}$ in $\mathbf{w}_{k}$ and $\underline{\mathbf{w}}_{mkn}$ in $\underline{\mathbf{w}}_{k}$ as $\mathcal{I} _{mkn}^{\prime}=\{ ( m-1 ) N^2+n+( j-1 ) N: j=1,\dots ,N \} $ and $\underline{\mathcal{I}}_{mkn}^{\prime}=\mathcal{I} _{mkn}^{\prime}\cup ( \mathcal{I} _{mkn}^{\prime}+MN^2 ) $, respectively. 

Since \eqref{sparse_problem_1_k} is convex with non-smooth sparsity-inducing penalties, we can apply the proximal methodologies to solve \eqref{sparse_problem_1_k} efficiently \cite{rish2014sparse}. More specifically, first, we can start the solution with the initial point $\underline{\mathbf{w}}_{k}^{0}$, which can be selected as the OBE weighting scheme in \eqref{opt_weighting}. Then, we can iteratively update $\underline{\mathbf{w}}_{k}^{i}$ with $i$ being the iteration index. Given with $\underline{\mathbf{w}}_{k}^{i}$ derived in the $i$-th iteration, the update of $\underline{\mathbf{w}}_{k}^{i+1}$  can be derived by solving the following proximal problem
\vspace{-0.55em}\begin{equation}\label{sparse_problem_1_k_proximal}
\underset{\underline{\mathbf{w}}_k\in \mathbb{R} ^{2MN^2}}{\min}\small{\frac{1}{2}}\left\| \underline{\mathbf{w}}_k-G_f\left( \underline{\mathbf{w}}_{k}^{i} \right) \right\| _{2}^{2}+\mu_k \Omega \left( \underline{\mathbf{w}}_k \right) ,
\vspace{-0.3em}\end{equation}
where $G_f\left( \underline{\mathbf{w}}_{k}^{i} \right) =\underline{\mathbf{w}}_{k}^{i}-\mu_k \nabla f\left( \underline{\mathbf{w}}_{k}^{i} \right) $ represents the gradient update with $\mu_k$ being the step length. For the optimization problem in \eqref{sparse_problem_1_k}, we have $\nabla f\left( \underline{\mathbf{w}}_k \right) =2\underline{\mathbf{\Gamma}}_k\underline{\mathbf{w}}_k-2\underline{\bm{\xi}}_k$ and
\vspace{-0.55em}\begin{equation}\label{sparse_problem_1_k_Gf}
G_f\left( \underline{\mathbf{w}}_{k}^{i} \right) =\underline{\mathbf{w}}_{k}^{i}-2\mu_k \underline{\mathbf{\Gamma}}_k\underline{\mathbf{w}}_{k}^{i}+2\mu_k \underline{\bm{\xi}}_k.
\vspace{-0.3em}\end{equation}
Note that the unique solution to \eqref{sparse_problem_1_k_proximal} can be derived as
\vspace{-0.55em}\begin{equation}\label{sparse_problem_1_k_Gf_prox}
\mathrm{Prox}_{\mu_k \lambda_k l_1/l_2}( G_f( \underline{\mathbf{w}}_{k}^{i} ) )\!=\!\!\underset{\underline{\mathbf{w}}_k\in \mathbb{R} ^{2MN^2}}{\mathrm{arg}\min}\small{\frac{1}{2}}\| \underline{\mathbf{w}}_k-G_f( \underline{\mathbf{w}}_{k}^{i} ) \| _{2}^{2}+\mu_k \Omega ( \underline{\mathbf{w}}_k ).
\vspace{-0.3em}\end{equation}
Thanks to the strong convexity, the closed-form solution for \eqref{sparse_problem_1_k_Gf} can be derived based on the following result from \cite[Sec. 5.3.3]{rish2014sparse}.

\begin{lemm}\label{lemma_prox}
For the standard $l_1/l_2$ group penalty in \eqref{sparse_problem_1_k}, the proximal operator can be computed in closed-form as 
\vspace{-0.55em}\begin{equation}\label{pro_operator}
\left[ \mathrm{Prox}_{\mu l_1/l_2}\left( \mathbf{x} \right) \right] _q=\left( 1-\frac{\mu\lambda}{\left\| \mathbf{x}_q \right\| _2} \right) _+\mathbf{x}_q,
\vspace{-0.3em}\end{equation}
where $\mathbf{x}_q$ denotes the projection of $\mathbf{x}$ on the $q$-th group with the penalty $\Omega ( \mathbf{x} ) =\lambda \sum_{q=1}^Q{\| \mathbf{x}_q \| _2}\,\,q=\{ 1,\dots ,Q \}$.
\end{lemm}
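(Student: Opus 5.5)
The plan is to use the fact that the objective in the proximal problem is separable across the groups. Assume the groups $\{\mathbf{x}_q\}_{q=1}^Q$ partition the coordinates, as the real-valued row groups $\underline{\mathbf{w}}_{mkn}$ do in \eqref{sparse_problem_1_k}. Then
\begin{equation*}
\frac{1}{2}\|\mathbf{u}-\mathbf{x}\|_2^2+\mu\lambda\sum_{q=1}^Q\|\mathbf{u}_q\|_2=\sum_{q=1}^Q\left(\frac{1}{2}\|\mathbf{u}_q-\mathbf{x}_q\|_2^2+\mu\lambda\|\mathbf{u}_q\|_2\right).
\end{equation*}
Each summand depends only on its own block, so the minimizer is obtained blockwise. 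The objective is strongly convex, so the minimizer is unique; this justifies the phrase ``the unique solution'' preceding \eqref{sparse_problem_1_k_Gf_prox}. It therefore suffices to solve the single-group problem $\min_{\mathbf{u}}\frac{1}{2}\|\mathbf{u}-\mathbf{x}_q\|_2^2+t\|\mathbf{u}\|_2$ with $t=\mu\lambda$.

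For the single-group problem I would use the subgradient optimality condition $\mathbf{0}\in\mathbf{u}-\mathbf{x}_q+t\,\partial\|\mathbf{u}\|_2$. The subdifferential of the norm is $\partial\|\mathbf{u}\|_2=\{\mathbf{u}/\|\mathbf{u}\|_2\}$ if $\mathbf{u}\neq\mathbf{0}$, and the closed unit ball if $\mathbf{u}=\mathbf{0}$. This gives two cases.

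\emph{Case $\mathbf{u}=\mathbf{0}$.} The condition reads $\mathbf{x}_q\in t\cdot\text{ball}$, i.e., $\|\mathbf{x}_q\|_2\le t$.

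\emph{Case $\mathbf{u}\neq\mathbf{0}$.} The condition becomes $\mathbf{u}(1+t/\|\mathbf{u}\|_2)=\mathbf{x}_q$. Hence $\mathbf{u}$ is a positive multiple of $\mathbf{x}_q$. Taking norms gives $\|\mathbf{u}\|_2=\|\mathbf{x}_q\|_2-t$, which must be positive, so $\|\mathbf{x}_q\|_2>t$. Substituting back yields $\mathbf{u}=(1-t/\|\mathbf{x}_q\|_2)\mathbf{x}_q$.

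The two cases combine into the compact form $(1-\mu\lambda/\|\mathbf{x}_q\|_2)_+\mathbf{x}_q$ of \eqref{pro_operator}, with the convention that the expression equals $\mathbf{0}$ when $\mathbf{x}_q=\mathbf{0}$. Uniqueness then confirms this is the proximal point.

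The only delicate step is the zero/nonzero case split via the subdifferential at the origin. If one prefers to avoid subgradients, an alternative is to write $\mathbf{u}=r\mathbf{d}$ with $\|\mathbf{d}\|_2=1$. By Cauchy--Schwarz, the optimal direction is $\mathbf{d}=\mathbf{x}_q/\|\mathbf{x}_q\|_2$. What remains is the scalar problem $\min_{r\ge 0}\frac{1}{2}(r-\|\mathbf{x}_q\|_2)^2+tr$, whose solution is $r=(\|\mathbf{x}_q\|_2-t)_+$. Since the lemma is quoted from \cite[Sec. 5.3.3]{rish2014sparse}, the argument simply reproduces that standard block soft-thresholding derivation.
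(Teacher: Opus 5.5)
Your proposal is correct. The paper gives no derivation of its own and simply imports the result from \cite[Sec.~5.3.3]{rish2014sparse}. Your argument is exactly the standard block soft-thresholding derivation behind that citation: blockwise separability over the disjoint row groups, followed by the subgradient case split at the origin, or equivalently the reduction to the scalar problem $\min_{r\ge 0}\frac{1}{2}(r-\|\mathbf{x}_q\|_2)^2+\mu\lambda r$.
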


Thus, following the methodology in Lemma~\ref{lemma_prox}, we can also construct $\Omega ( \underline{\mathbf{w}}_k )$ in an intuitive group-structured formulation
\vspace{-0.55em}\begin{equation}
\Omega ( \underline{\mathbf{w}}_k ) =\lambda_k \sum_{q_1=1}^{Q_1}{\| \underline{\mathbf{w}}_k( \underline{\mathcal{I}}_{q_1k}^{\prime} ) \| _2}\,\,q_1=\left\{ 1,\dots ,Q_1 \right\} ,Q_1=MN,
\vspace{-0.3em}\end{equation}
where $\underline{\mathcal{I}}_{q_1k}^{\prime}$ is a concise representation of $\underline{\mathcal{I}}_{m( q_1 ) kn( q_1 )}^{\prime}=\mathcal{I} _{m( q_1 ) kn( q_1 )}^{\prime}\cup ( \mathcal{I} _{m( q_1 ) kn( q_1 )}^{\prime}+MN^2 ) $. Note that the $q_1$-th group in  $\underline{\mathbf{w}}_k$ denotes the vector between UE $k$ and the $n\left( q_1 \right) $-th antenna of the $m\left( q_1 \right)$-th array with $n\left( q_1 \right) =\mathrm{mod}\left( q_1-1,N \right) +1$ and $m\left( q_1 \right) =\lfloor (q_1-1)/{N} \rfloor +1$. Thus, relying on \eqref{pro_operator}, the $q_1$-th group-related elements in $\mathrm{Prox}_{\mu_k \lambda_k l_1/l_2}( G_f( \underline{\mathbf{w}}_{k}^{i} ) ) $ can be computed as
\vspace{-0.55em}\begin{equation}\label{pro_operator_Gf_q1}
\begin{aligned}
&\left[ \mathrm{Prox}_{\mu_k \lambda_k l_1/l_2}\left( G_f\left( \underline{\mathbf{w}}_{k}^{i} \right) \right) \right] _{q_1}\\
&=\left( 1-{\mu_k \lambda_k}/{\| [ G_f\left( \underline{\mathbf{w}}_{k}^{i} \right) ] _{\underline{\mathcal{I}}_{q_1k}^{\prime}} \| _2} \right) _+\left[ G_f\left( \underline{\mathbf{w}}_{k}^{i} \right) \right] _{\underline{\mathcal{I}}_{q_1k}^{\prime}},
\end{aligned}
\vspace{-0.3em}\end{equation}
where, for the convenience of understanding, we utilize $\left[ \mathrm{Prox}_{\mu_k \lambda_k l_1/l_2}\left( \cdot \right) \right] _{q_1}$ to denote the proximal operator for the projected vector for the $q_1$-th group and $\left[ \mathbf{a} \right] _{\underline{\mathcal{I}}_{q_1k}^{\prime}}$ to represent the intended projected vector by extracting the elements from $\mathbf{a}$  based on the element index set $\underline{\mathcal{I}}_{q_1k}^{\prime}$.

Based on the above proximal methodologies, we can update 
\vspace{-0.55em}\begin{equation}\label{update_w_1}
\underline{\mathbf{w}}_{k}^{i+1}\gets \mathrm{Prox}_{\mu_k \lambda_k l_1/l_2}( G_f( \widehat{\underline{\mathbf{w}}}_{k}^{i} ) )
\vspace{-0.3em}\end{equation}
with 
\vspace{-0.55em}\begin{equation}\label{update_w_hat_1}
\widehat{\underline{\mathbf{w}}}_{k}^{i}=\underline{\mathbf{w}}_{k}^{i}+\left( \frac{t_i-1}{t_{i+1}} \right) \left( \underline{\mathbf{w}}_{k}^{i}-\underline{\mathbf{w}}_{k}^{i-1} \right)
\vspace{-0.3em}\end{equation}
where 
\vspace{-0.55em}\begin{equation}\label{update_t_i_1}
t_{i+1}={(1+\sqrt{1+4t_{i}^{2}})}/{2}
\vspace{-0.3em}\end{equation}
is the FISTA step for accelerating the coverage \cite{rish2014sparse}. More specifically, based on \eqref{pro_operator_Gf_q1}, the elements in $\underline{\mathbf{w}}_{k}^{i+1}$ with the element indexes $\underline{\mathcal{I}}_{q_1k}^{\prime}$ can be updated as 
\vspace{-0.55em}\begin{equation}\label{update_w_1_q1}
\underline{\mathbf{w}}_{k}^{i+1}( \underline{\mathcal{I}}_{q_1k}^{\prime} ) \gets [ \mathrm{Prox}_{\mu_k \lambda_k l_1/l_2}( G_f( \widehat{\underline{\mathbf{w}}}_{k}^{i} ) ) ] _{q_1}.
\vspace{-0.3em}\end{equation}

The way to solve $\mathcal{P} ^{\mathrm{ant},( 1 )}$ in \eqref{sparse_problem_1} is summarized in Algorithm~\ref{algo_1}. To implement it efficiently, we adopt a warm-restart strategy for the regularization parameter $\lambda_k$ as in \cite[Algorithm 1]{10113887} to accelerate the convergence (also in the following all algorithms). Specifically, we start with a large regularization value $\lambda _{k,\mathrm{large}}$. Then, we iteratively shrink $\lambda _{k,\mathrm{large}}$ towards the target regularization factor $\lambda _k=\bar{\lambda}\lambda_{k,\max}$ with $\lambda_{k,\max}=2\| \boldsymbol{\xi }_k( \mathcal{I} _{q_1k}^{\prime} ) \| _2$ being the widely used reference regularization factor in the considered group $l_1/l_2$ norm \cite{tibshirani2012strong} and $\bar{\lambda}\in [ 0,1 ]$ being the relative regularization ratio. The sparse optimization problem with the particular regularization factor can be solved in turn. At each step, the solution of the previous subproblem is used as the initialization of the current one. For the choice of $\mu_k$, we can compute it as $\mu _k=1/\left( 2\delta _{\max}\left( \mathbf{\Gamma }_k \right) \right) $, where $2\delta _{\max}\left( \mathbf{\Gamma }_k \right)$ is an upper bound of the Lipschitz  constant of $\nabla f\left( \underline{\mathbf{w}}_k \right)$ \cite{bach2012optimization} with $\delta _{\max}\left( \mathbf{\Gamma }_k  \right) $ being the maximum eigenvalue for $\mathbf{\Gamma }_k$.

\begin{algorithm}[t!]
\label{algo_1}
%\doublespacing
\caption{Algorithm for solving $\mathcal{P} ^{\mathrm{ant},( 1 )}$}
\KwIn{$\underline{\mathbf{w}}^{*}$, $\underline{\bm{\Gamma}}$, $\underline{\bm{\xi}}$, group information $\{ q_1 \} $, index information $\{ \underline{\mathcal{I}}_{q_1k}^{\prime} \} $, $\{\mu_k\}$, $\{\lambda_k\}$, $\{\Omega ( \underline{\mathbf{w}}_k ) \} $, $i_{\max}$}

\KwOut{$\mathbf{w}\in \mathbb{C} ^{KMN^2}$}

\For{$k = 1,\ldots,K$}
{
{\bf Initiation:} $i=1$, $\widehat{\underline{\mathbf{w}}}_{k}^{0}\gets \underline{\mathbf{w}}_{k}^{*}$, $\underline{\mathbf{w}}_{k}^{0}\gets \underline{\mathbf{w}}_{k}^{*}$, $t_1=1$;\\

\Repeat{$i = i_{\max}$ or convergence}
{
Compute $G_f( \widehat{\underline{\mathbf{w}}}_{k}^{i-1} )$ based on \eqref{sparse_problem_1_k_Gf};\\
\For{$q_1 = 1,\ldots,Q_1$}
{
Compute $[ \mathrm{Prox}_{\mu \lambda_k l_1/l_2}( G_f( \widehat{\underline{\mathbf{w}}}_{k}^{i-1} ) ) ] _{q_1}$ based on \eqref{pro_operator_Gf_q1};\\
$\underline{\mathbf{w}}_{k}^{i}( \underline{\mathcal{I}}_{q_1k}^{\prime} ) \gets [ \mathrm{Prox}_{\mu \lambda_k l_1/l_2}( G_f( \widehat{\underline{\mathbf{w}}}_{k}^{i-1} ) ) ] _{q_1}$;}
Compute $t_{i+1}={(1+\sqrt{1+4t_{i}^{2}})}/{2}$;\\
Compute $\widehat{\underline{\mathbf{w}}}_{k}^{i}=\underline{\mathbf{w}}_{k}^{i}+\left( \frac{t_i-1}{t_{i+1}} \right) \left( \underline{\mathbf{w}}_{k}^{i}-\underline{\mathbf{w}}_{k}^{i-1} \right)$;\\
$i \leftarrow i + 1$;\\
}
}
Derive $\mathbf{w}\in \mathbb{C} ^{KMN^2}$ from the transformation of \eqref{var_real}.\\
\end{algorithm}

\vspace{-0.3cm}
\subsection{Antenna-Level Activation for All UEs}\label{antenna_all_UE}
In this part, we consider the antenna activation towards all UEs; that is, some of the antennas are turned into the sleep mode, not serving any UEs at all, while the remaining activated antennas also serve a limited number of UEs. This scheme is a network-wide antenna activation scheme for all UEs. The practical interpretation of this scheme is to limit the number of activated antennas and also limit the average number of UEs served by the remaining activated antennas. By doing so, the antenna circuit power $\sum_{n=1}^N{A_{mn}P_{mn}^{\mathrm{ant},\mathrm{c}}}$ and antenna signal processing power $\sum_{n=1}^N{\sum_{k=1}^K{a_{mnk}P_{mnk}^{\mathrm{pro}}}}$ in \eqref{power_ap} can be reduced by turning off some antennas based on the sparse activation scheme in this part. 

First, we  define the sparsity-inducing function as
\vspace{-0.55em}\begin{equation}\label{sparse_term2_1}
\Omega \left( \underline{\mathbf{w}} \right) =\gamma \sum_{m=1}^M{\sum_{n=1}^N{\left\| \underline{\mathbf{w}}\left[ \underline{\mathcal{J}}_{mn} \right] \right\| _2}},
\vspace{-0.3em}\end{equation}
where $\underline{\mathcal{J}}_{mn}=\underset{k=1}{\overset{K}{\cup}}\underline{\mathcal{I}}_{mkn}$ denotes the subset of element indexes of the collective real-valued weighting vector of the $n$-th antenna for the $m$-th array for all $K$ UEs $\underline{\mathbf{w}}_{mn}=[\Re (\mathbf{w}_{mn}^{T}),\Im (\mathbf{w}_{mn}^{T})]^T\in \mathbb{R} ^{2KN}$, where $\mathbf{w}_{mn}\in \mathbb{C} ^{KN}$ denotes the $n$-th row of $\mathbf{W}_m=\left[ \mathbf{W}_{m1},\dots ,\mathbf{W}_{mK} \right] \in \mathbb{C} ^{N\times KN}$ after transposition, and $\gamma $ denotes the regularization factor for \eqref{sparse_term2_1}. Moreover, it is useful to also include the sparsity penalty function as \eqref{sparse_term1} into \eqref{sparse_term2_1} to control the number of UEs served by the remaining activated antennas as\footnote{Note that, in the technical parts, we define the indexes of the array, antenna, and UE from the perspective of the group indexes, such as $q_2$ and $q_3$. One feasible implementation in the simulation parts is to traverse the indexes of the array, antenna, and UE and generate the respective group information.}
\vspace{-0.55em}\begin{equation}\label{sparse_term2}
\Omega ( \underline{\mathbf{w}} ) =\gamma \sum_{q_2=1}^{Q_2}{\| \underline{\mathbf{w}}( \underline{\mathcal{J}}_{q_2} ) \| _2}+\lambda \sum_{q_3=1}^{Q_3}{\| \underline{\mathbf{w}}( \underline{\mathcal{I}}_{q_3} ) \| _2},
\vspace{-0.3em}\end{equation}
where $\sum_{m=1}^M{\sum_{n=1}^N{\| \underline{\mathbf{w}}[ \underline{\mathcal{J} }_{mn} ] \| _2=}}\sum_{q_2=1}^{Q_2}{\| \underline{\mathbf{w}}( \underline{\mathcal{J}}_{m( q_2 ) n( q_2 )} ) \| _2}$ is an intuitive group representation with $Q_2=MN$, $m\left( q_2 \right) =\lfloor {(q_2-1)}/{N} \rfloor +1$, and $n( q_2 ) =\mathrm{mod}\left( q_2-1,N \right) +1$. Then, for convenience, we use $\underline{\mathcal{J}}_{q_2}$ to denote $\underline{\mathcal{J}}_{m\left( q_2 \right) n\left( q_2 \right)}=\underset{k=1}{\overset{K}{\cup}}\underline{\mathcal{I}}_{m\left( q_2 \right) kn\left( q_2 \right)}$. Moreover, the second part in \eqref{sparse_term2} is also a group representation, where $\underline{\mathcal{I}}_{q_3}$ is the simple representation for $\underline{\mathcal{I}}_{m( q_3 ) k( q_3 ) n( q_3 )}=\mathcal{I} _{m( q_3 ) k( q_3 ) n( q_3 )}\cup ( \mathcal{I} _{m( q_3 ) k( q_3 ) n( q_3 )}+KMN^2 ) $ with $\mathcal{I} _{m( q_3 ) k( q_3 ) n( q_3 )}=\{ ( k( q_3 ) -1 ) MN^2+( m( q_3 ) -1 ) N^2+n( q_3 ) +( j-1 ) N: j=1,\dots ,N \} $, $Q_3=MNK$, $m( q_3 ) =\lfloor {\mathrm{mod}\left( q_3-1,MN \right)}/{N} \rfloor +1$, $k\left( q_3 \right) =\lfloor {(q_3-1)}/{MN} \rfloor +1$, and $n( q_3 ) =\mathrm{mod}( q_3-1,N ) +1$. Thus, we can formulate the antenna activation problem for all UEs as
\vspace{-0.55em}\begin{equation}\label{sparse_problem_2}
\mathcal{P} ^{\mathrm{ant},\left( 2 \right)}\!\!:\!\!\!\!\underset{\underline{\mathbf{w}}\in \mathbb{R} ^{2KMN^2}}{\min}f( \underline{\mathbf{w}} ) +\gamma \!\!\sum_{q_2=1}^{Q_2}{\| \underline{\mathbf{w}}( \underline{\mathcal{J}}_{q_2} ) \| _2}+\lambda\!\! \sum_{q_3=1}^{Q_3}{\| \underline{\mathbf{w}}( \underline{\mathcal{I}}_{q_3} ) \| _2},
\vspace{-0.3em}\end{equation}
where 
\vspace{-0.55em}\begin{equation}\label{f}
f\left( \underline{\mathbf{w}} \right) =\underline{\mathbf{w}}^{T}\underline{\mathbf{\Gamma}}\underline{\mathbf{w}}-2\underline{\mathbf{w}}^{T}\underline{\bm{\xi}}.
\vspace{-0.3em}\end{equation}
Different from \eqref{sparse_problem_1}, the problem in \eqref{sparse_problem_2} cannot be solved in parallel because the sparsity-inducing function in \eqref{sparse_term2} involves coupling between UEs. It can be observed that $\Omega ( \underline{\mathbf{w}} )$ in \eqref{sparse_term2} has the standard tree-structured characteristics, where arbitrary two groups in \eqref{sparse_term2} are either disjoint or one is included in the other \cite{bach2012optimization}. The group $q_3$ is included in the group $q_2$, which can be clearly showcased from the definition of $\underline{\mathcal{J}}_{q_2}$, and thus, the group $q_3$ and $q_2$ are defined as the leaf group and root group, respectively. Meanwhile, two arbitrary groups in the leaf group set or the root group set are disjoint, respectively. Physically, each root group reflects the collective activity of one antenna across all UEs, so zeroing its corresponding variable block switches off that antenna for the whole network. Each leaf group reflects the activity of that antenna for one particular UE, so zeroing its corresponding variable block removes the associated antenna-UE serving link.

For the tree-structured group norm in \eqref{sparse_term2} with a more compact structure $\Omega : \mathbf{x}\mapsto \sum\nolimits_{g\in \mathcal{G}}^{}{\| \mathbf{x}_g \|}_2$, the proximal operator for the tree-structured $l_1/l_2$ group norm with the included order $g_1\preceq \dots \preceq g_m$ in the sets of groups $\mathcal{G} $ is $\mathrm{Prox}_{\mu \Omega}=\Pi _{g_m}\circ \cdots \circ \Pi _{g_1}$, where $\Pi _g$ represents the proximal operator $\mathbf{x}_g\mapsto \mathrm{prox}_{\mu l_2}( \mathbf{x}_g ) $ upon the subspace corresponding to group $g$ and we have $f\circ y( \cdot ) =f( y( \cdot ) ) $ for any function $f$ and $y$ \cite{bach2012optimization}. Thus, the vector for this tree-structured group in iteration $i+1$ can be updated following the order from the leaf groups $\{q_3\}$ to the root groups $\{q_2\}$ as
\vspace{-0.55em}\begin{equation}\label{update_w_2}
\underline{\mathbf{w}}^{i+1}\gets \mathrm{Prox}_{\mu \gamma l_1/l_2}\circ \mathrm{Prox}_{\mu \lambda l_1/l_2}( G_f( \widehat{\underline{\mathbf{w}}}^i ) ),
\vspace{-0.3em}\end{equation}
where 
\vspace{-0.55em}\begin{equation}\label{update_w_hat_2}
\widehat{\underline{\mathbf{w}}}^i=\underline{\mathbf{w}}^i+\left( \frac{t_i-1}{t_{i+1}} \right) \left( \underline{\mathbf{w}}^i-\underline{\mathbf{w}}^{i-1} \right)
\vspace{-0.3em}\end{equation}
with $t_{i+1}$ updated as in \eqref{update_t_i_1}. More specifically, first, among the leaf groups $\left\{ q_3 \right\} $, the elements in $\underline{\mathbf{w}}^{i+1}$ with the element index set $\underline{\mathcal{I}}_{q_3}$ can be updated as 
\vspace{-0.55em}\begin{equation}\label{update_w_leaf_2}
\underline{\mathbf{w}}^{i+1}( \underline{\mathcal{I}}_{q_3} ) \gets [ \mathrm{Prox}_{\mu \lambda l_1/l_2}( G_f( \widehat{\underline{\mathbf{w}}}^i ) ) ] _{q_3},
\vspace{-0.3em}\end{equation}
where
\vspace{-0.55em}\begin{equation}\label{pro_operator_Gf_q3}
\begin{aligned}
&[ \mathrm{Prox}_{\mu \lambda l_1/l_2}( G_f( \underline{\widehat{\mathbf{w}}}^{i} ) ) ] _{q_3}\\
&=( 1-{\mu \lambda}/{\| [ G_f( \underline{\widehat{\mathbf{w}}}^{i} ) ] _{\underline{\mathcal{I}}_{q_3}} \| _2} ) _+[ G_f( \underline{\widehat{\mathbf{w}}}^{i} ) ] _{\underline{\mathcal{I}}_{q_3}},
\end{aligned}
\vspace{-0.3em}\end{equation}
with the proximal operator defined similarly to \eqref{pro_operator_Gf_q1}. Then, among the root groups $\left\{ q_2 \right\} $, the elements in $\underline{\mathbf{w}}^{i+1}$ with the element index set $\underline{\mathcal{J}}_{q_2}$ can be further updated as
\vspace{-0.55em}\begin{equation}\label{update_w_root_2}
\underline{\mathbf{w}}^{i+1}( \underline{\mathcal{J}}_{q_2} ) \gets [ \mathrm{Prox}_{\mu \gamma l_1/l_2}( G_f( \widehat{\underline{\mathbf{w}}}^i ) ) ] _{q_2}
\vspace{-0.3em}\end{equation}
with 
\vspace{-0.55em}\begin{equation}\label{pro_operator_Gf_q2}
\begin{aligned}
&[ \mathrm{Prox}_{\mu \gamma l_1/l_2}( G_f( \underline{\widehat{\mathbf{w}}}^{i} ) ) ] _{q_2}\\
&=( 1-{\mu \gamma}/{\| [ G_f( \underline{\widehat{\mathbf{w}}}^{i} ) ] _{\underline{\mathcal{J}}_{q_2}} \| _2} ) _+[ G_f( \underline{\widehat{\mathbf{w}}}^{i} ) ] _{\underline{\mathcal{J}}_{q_2}}.
\end{aligned}
\vspace{-0.3em}\end{equation}
In summary, we present the algorithm to solve the sparse optimization problem $\mathcal{P} ^{\mathrm{ant},\left( 2 \right)}$ as \eqref{sparse_problem_2} in Algorithm~\ref{algo_2}. We model $\lambda=\bar{\lambda}\lambda_{\max}$ with $\lambda_{\max}=2\| \boldsymbol{\xi }( \mathcal{I} _{q_3}) \| _2$ and $\gamma=\bar{\gamma}\gamma_{\max}$ with $\gamma_{\max}=2\| \boldsymbol{\xi }( \mathcal{J} _{q_2}) \| _2$, where $\bar{\lambda}\in [ 0,1 ] $ and $\bar{\gamma}\in [ 0,1 ]$ denote the relative regularization ratios for the leaf groups and root groups, respectively. Meanwhile, we have $\mu =1/\left( 2\delta _{\max}\left( \mathbf{\Gamma } \right) \right) $.

\begin{algorithm}[t!]
\label{algo_2}
%\doublespacing
\caption{Algorithm for solving $\mathcal{P} ^{\mathrm{ant},( 2 )}$}
\KwIn{$\underline{\mathbf{w}}^{*}$, $\underline{\bm{\Gamma}}$, $\underline{\bm{\xi}}$, root group information $\{ q_2 \} $, leaf group information $\{ q_3 \} $, index information $\{ \underline{\mathcal{J}}_{q_2} \} $, index information $\{ \underline{\mathcal{I}}_{q_3} \} $, $\mu$, $\lambda$, $\gamma$, $\Omega ( \underline{\mathbf{w}})  $, $i_{\max}$}

\KwOut{$\mathbf{w}\in \mathbb{C} ^{KMN^2}$}

\For{$k = 1,\ldots,K$}
{
{\bf Initiation:} $i=1$, $\widehat{\underline{\mathbf{w}}}_{k}^{0}\gets \underline{\mathbf{w}}_{k}^{*}$, $\underline{\mathbf{w}}_{k}^{0}\gets \underline{\mathbf{w}}_{k}^{*}$, $t_1=1$;\\

\Repeat{$i = i_{\max}$ or convergence}
{
Compute $G_f( \widehat{\underline{\mathbf{w}}}_{k}^{i-1} )$ based on \eqref{sparse_problem_1_k_Gf};\\
\For{leaf groups $q_3 = 1,\ldots,Q_3$}
{
Compute $[ \mathrm{Prox}_{\mu \lambda l_1/l_2}( G_f( \widehat{\underline{\mathbf{w}}}_{k}^{i-1} ) ) ] _{q_3}$ based on \eqref{pro_operator_Gf_q3};\\
$\underline{\mathbf{w}}_{k}^{i}( \underline{\mathcal{I}}_{q_3}) \gets [ \mathrm{Prox}_{\mu \lambda l_1/l_2}( G_f( \widehat{\underline{\mathbf{w}}}_{k}^{i-1} ) ) ] _{q_3}$;}
\For{root groups $q_2 = 1,\ldots,Q_2$}
{
Compute $[ \mathrm{Prox}_{\mu \gamma l_1/l_2}( G_f( \widehat{\underline{\mathbf{w}}}_{k}^{i-1} ) ) ] _{q_2}$ based on \eqref{pro_operator_Gf_q2};\\
$\underline{\mathbf{w}}_{k}^{i}( \underline{\mathcal{J}}_{q_2}) \gets [ \mathrm{Prox}_{\mu \gamma l_1/l_2}( G_f( \widehat{\underline{\mathbf{w}}}_{k}^{i-1} ) ) ] _{q_2}$;}
Compute $t_{i+1}={(1+\sqrt{1+4t_{i}^{2}})}/{2}$;\\
Compute $\widehat{\underline{\mathbf{w}}}_{k}^{i}=\underline{\mathbf{w}}_{k}^{i}+\left( \frac{t_i-1}{t_{i+1}} \right) \left( \underline{\mathbf{w}}_{k}^{i}-\underline{\mathbf{w}}_{k}^{i-1} \right)$;\\
$i \leftarrow i + 1$;\\
}
}
Derive $\mathbf{w}\in \mathbb{C} ^{KMN^2}$ from the transformation of \eqref{var_real}.\\
\end{algorithm}

\vspace{-0.2cm}

\section{Array-Level Sparse Activation}\label{sec_array}
In Section~\ref{sec_antenna}, we studied antenna-level activations for UE-specific vision and for all UEs vision, which can lead to power consumption reductions. To further reduce the power consumption and enhance the EE with proper SE performance, we can reduce array-level power consumption by turning off some array-UE links or whole arrays. In this section, we study the array-level activation for the particular UE and for all UEs, that is, all antennas in a particular array are simultaneously activated or not for one particular UE and for all UEs.

\vspace{-0.3cm}
\subsection{Array-Level Activation for Particular UE}\label{array_per_UE}
We study the UE-specific array activation scheme, and the physical interpretation is to limit the average number of UEs served by the arrays. This scheme can be regarded as the typical AP-UE association problem in CF mMIMO networks \cite{10918608}. Relying on this, the antenna signal processing power $\sum_{n=1}^N{\sum_{k=1}^K{a_{mnk}P_{mnk}^{\mathrm{pro}}}}$ and signaling power in the fronthaul links ${\tau _u}/{\tau _c}( \sum\nolimits_{k=1}^K{b_{mk}P_{mk}^{\mathrm{sig}}})$ can be efficiently reduced by turning off some array-UE serving links. The array activation problem in this part can also be modeled as a sparse optimization problem, as in \eqref{MSE_Sparse}, by inducing the array activation penalty in $\Omega \left( \underline{\mathbf{w}} \right)$. For the array activation for the particular UE, we model the sparsity-inducing function as
\vspace{-0.55em}\begin{equation}\label{sparse_term3}
\Omega \left( \underline{\mathbf{w}} \right) =\rho \sum_{m=1}^M{\left\| \underline{\mathbf{w}}\left[ \underline{\mathcal{H}}_{mk}\right] \right\| _2},
\vspace{-0.3em}\end{equation}
where $\underline{\mathcal{H}}_{mk}=\mathcal{H} _{mk}\cup \left( \mathcal{H} _{mk}+KMN^2 \right)$ represents the subset of element indexes of $\underline{\mathbf{w}}_{mk}=[\Re (\mathbf{w}_{mk}^{T}),\Im (\mathbf{w}_{mk}^{T})]^T\in \mathbb{R} ^{2N^2}$ in $\underline{\mathbf{w}}$ with $\mathcal{H} _{mk}=\{ ( ( k-1 ) MN^2+( m-1 ) N^2+1 ) :( ( k-1 ) MN^2+mN^2 ) \} $ being the element indexes of $\mathbf{w}_{mk}$ in $\mathbf{w}$. Thus, based on \eqref{sparse_term3}, we can formulate the array activation problem for the particular UE as 
\vspace{-0.55em}\begin{equation}\label{sparse_problem_3}
\mathcal{P} ^{\mathrm{array},\left( 1 \right)}:\underset{\underline{\mathbf{w}}\in \mathbb{R} ^{2KMN^2}}{\min}f( \underline{\mathbf{w}} ) +\rho \sum_{m=1}^M{\left\| \underline{\mathbf{w}}\left[ \underline{\mathcal{H}}_{mk}\right] \right\| _2},
\vspace{-0.3em}\end{equation}
where $f( \underline{\mathbf{w}} )$ is given as in \eqref{f} and $\rho$ is the regularization factor.
Fortunately, $\mathcal{P} ^{\mathrm{array},\left( 1 \right)}$ can also be solved in parallel to all $K$ UEs, where the $k$-th subproblem can be formulated as
\vspace{-0.55em}\begin{equation}\label{sparse_problem_3_k}
\mathcal{P}_k ^{\mathrm{array},\left( 1 \right)}:\underset{\underline{\mathbf{w}}_k\in \mathbb{R} ^{2MN^2}}{\min}f\left( \underline{\mathbf{w}}_k \right) +\rho_k \sum_{m=1}^M{\left\| \underline{\mathbf{w}}_k\left( \underline{\mathcal{H}}_{mk}^{\prime} \right) \right\| _2},
\vspace{-0.3em}\end{equation}
where $f\left( \underline{\mathbf{w}}_k \right)$ is shown in \eqref{f_k}, $\rho_k$ denotes the regularization factor for the $k$-th subproblem, and $\underline{\mathcal{H}}_{mk}^{\prime}=\mathcal{H} _{mk}^{\prime}\cup \left( \mathcal{H} _{mk}^{\prime}+MN^2 \right) $ represents the subset of element indexes of $\underline{\mathbf{w}}_{mk}$ in $\underline{\mathbf{w}}_{k}$ with $\mathcal{H} _{mk}^{\prime}=\left\{ \left( \left( m-1 \right) N^2+1 \right) :mN^2 \right\} $ being the subset of element indexes of ${\mathbf{w}}_{mk}$ in ${\mathbf{w}}_{k}$. Focusing on \eqref{sparse_problem_3_k}, we can further denote $\Omega \left( \underline{\mathbf{w}}_k \right)=\rho_k \sum_{m=1}^M{\left\| \underline{\mathbf{w}}_k\left( \underline{\mathcal{H}}_{mk}^{\prime} \right) \right\| _2}$ as the intuitive group form\footnote{The involvement of $q_4$ is to simply replace $m$ in $\underline{\mathcal{H}}_{mk}^{\prime}$ as $q_4$. But we still utilize these $q$-related symbols to describe the group characteristics following the similar style as other sparse optimization problems in this paper.} $\Omega ( \underline{\mathbf{w}}_k ) =\rho _k\sum_{q_4=1}^{Q_4}{\| \underline{\mathbf{w}}_{k}(\underline{\mathcal{H}}_{q_4k}^{\prime}) \| _2}$ with $\underline{\mathcal{H}}_{q_4k}^{\prime}=\mathcal{H} _{q_4k}^{\prime}\cup ( \mathcal{H} _{q_4k}^{\prime}+MN^2 ) $ and  $q_4=\{ 1,\dots ,Q_4 \} ,Q_4=M$. We can observe that $\Omega \left( \underline{\mathbf{w}}_k \right)$ in this part is also a standard $l_1/l_2$ norm, and thus, we can derive the proximal operator as 
\vspace{-0.55em}\begin{equation}\label{pro_operator_Gf_q4}
\begin{aligned}
&[ \mathrm{Prox}_{\mu_k \rho_k l_1/l_2}( G_f(  \widehat{\underline{\mathbf{w}}}_{k}^{i}) ) ] _{q_4}\\
&=( 1-{\mu_k \rho_k}/{\| [ G_f(  \widehat{\underline{\mathbf{w}}}_{k}^{i} ) ] _{\underline{\mathcal{H}}_{q_4k}^{\prime}} \| _2} ) _+[ G_f(  \widehat{\underline{\mathbf{w}}}_{k}^{i} ) ] _{\underline{\mathcal{H}}_{q_4k}^{\prime}}.
\end{aligned}
\vspace{-0.3em}\end{equation}
We can easily derive the update steps for the sparse weighting vector in this part based on the methodology in Section~\ref{sec_antenna_per_UE} by simply replacing the group information $\{q_1\}$ and index information $\{ \underline{\mathcal{I}}_{q_1k}^{\prime} \} $ in Section~\ref{sec_antenna_per_UE} as the group information $\{q_4\}$ and index information $\{ \underline{\mathcal{H}}_{q_4k}^{\prime} \} $ in this part, respectively. And thus, we can derive the algorithm to solve $\mathcal{P} ^{\mathrm{array},\left( 1 \right)}$ in \eqref{sparse_problem_3} as Algorithm~\ref{algo_3}, where we have $\rho_k=\bar{\rho}\rho_{k,\max}$ with $\rho_{k,\max}=2\| \boldsymbol{\xi }_k( \mathcal{H} _{q_4k}^{\prime} ) \| _2$ and $\bar{\rho}\in [ 0,1 ]$, and $\mu _k=1/\left( 2\delta _{\max}\left( \mathbf{\Gamma }_k \right) \right) $.

\begin{algorithm}[t!]
\label{algo_3}
%\doublespacing
\caption{Algorithm for solving $\mathcal{P} ^{\mathrm{array},\left( 1 \right)}$}
\KwIn{$\underline{\mathbf{w}}^{*}$, $\underline{\bm{\Gamma}}$, $\underline{\bm{\xi}}$, group information $\{ q_4 \} $, index information $\{ \underline{\mathcal{I}}_{q_1k}^{\prime} \} $, $\{\mu_k\}$, $\{\lambda_k\}$, $\{\Omega ( \underline{\mathbf{w}}_k ) \} $, $i_{\max}$}

\KwOut{$\mathbf{w}\in \mathbb{C} ^{KMN^2}$}

\For{$k = 1,\ldots,K$}
{
{\bf Initiation:} $i=1$, $\widehat{\underline{\mathbf{w}}}_{k}^{0}\gets \underline{\mathbf{w}}_{k}^{*}$, $\underline{\mathbf{w}}_{k}^{0}\gets \underline{\mathbf{w}}_{k}^{*}$, $t_1=1$;\\

\Repeat{$i = i_{\max}$ or convergence}
{
Compute $G_f( \widehat{\underline{\mathbf{w}}}_{k}^{i-1} )$ based on \eqref{sparse_problem_1_k_Gf};\\
\For{$q_4 = 1,\ldots,Q_4$}
{
Compute $[ \mathrm{Prox}_{\mu_k \rho_k l_1/l_2}( G_f(  \widehat{\underline{\mathbf{w}}}_{k}^{i}) ) ] _{q_4}$ based on \eqref{pro_operator_Gf_q4};\\
$\underline{\mathbf{w}}_{k}^{i}( \underline{\mathcal{H}}_{q_4k}^{\prime} ) \gets [ \mathrm{Prox}_{\mu \rho_k l_1/l_2}( G_f( \widehat{\underline{\mathbf{w}}}_{k}^{i-1} ) ) ] _{q_4}$;}
Compute $t_{i+1}={(1+\sqrt{1+4t_{i}^{2}})}/{2}$;\\
Compute $\widehat{\underline{\mathbf{w}}}_{k}^{i}=\underline{\mathbf{w}}_{k}^{i}+\left( \frac{t_i-1}{t_{i+1}} \right) \left( \underline{\mathbf{w}}_{k}^{i}-\underline{\mathbf{w}}_{k}^{i-1} \right)$;\\
$i \leftarrow i + 1$;\\
}
}
Derive $\mathbf{w}\in \mathbb{C} ^{KMN^2}$ from the transformation of \eqref{var_real}.\\
\end{algorithm}

\begin{algorithm}[t!]
\label{algo_4}
%\doublespacing
\caption{Algorithm for solving $\mathcal{P} ^{\mathrm{array},\left( 2 \right)}$}
\KwIn{$\underline{\mathbf{w}}^{*}$, $\underline{\bm{\Gamma}}$, $\underline{\bm{\xi}}$, root group information $\{ q_5 \} $, leaf group information $\{ q_6 \} $, index information $\{ \underline{\mathcal{F}}_{q_5} \} $, index information $\{ \underline{\mathcal{H}}_{q_6} \} $, $\mu$, $\omega$, $\rho$, $\Omega ( \underline{\mathbf{w}}) $, $i_{\max}$}

\KwOut{$\mathbf{w}\in \mathbb{C} ^{KMN^2}$}

\For{$k = 1,\ldots,K$}
{
{\bf Initiation:} $i=1$, $\widehat{\underline{\mathbf{w}}}_{k}^{0}\gets \underline{\mathbf{w}}_{k}^{*}$, $\underline{\mathbf{w}}_{k}^{0}\gets \underline{\mathbf{w}}_{k}^{*}$, $t_1=1$;\\

\Repeat{$i = i_{\max}$ or convergence}
{
Compute $G_f( \widehat{\underline{\mathbf{w}}}_{k}^{i-1} )$ based on \eqref{sparse_problem_1_k_Gf};\\
\For{leaf groups $q_6 = 1,\ldots,Q_6$}
{
Compute $[ \mathrm{Prox}_{\mu \rho l_1/l_2}( G_f( \widehat{\underline{\mathbf{w}}}_{k}^{i-1} ) ) ] _{q_6}$ based on \eqref{pro_operator_Gf_q6};\\
$\underline{\mathbf{w}}_{k}^{i}( \underline{\mathcal{H}}_{q_6}) \gets [ \mathrm{Prox}_{\mu \rho l_1/l_2}( G_f( \widehat{\underline{\mathbf{w}}}_{k}^{i-1} ) ) ] _{q_6}$;}
\For{root groups $q_5 = 1,\ldots,Q_5$}
{
Compute $[ \mathrm{Prox}_{\mu \omega l_1/l_2}( G_f( \widehat{\underline{\mathbf{w}}}_{k}^{i-1} ) ) ] _{q_5}$ based on \eqref{pro_operator_Gf_q5};\\
$\underline{\mathbf{w}}_{k}^{i}( \underline{\mathcal{F}}_{q_5}) \gets [ \mathrm{Prox}_{\mu \omega l_1/l_2}( G_f( \widehat{\underline{\mathbf{w}}}_{k}^{i-1} ) ) ] _{q_5}$;}
Compute $t_{i+1}={(1+\sqrt{1+4t_{i}^{2}})}/{2}$;\\
Compute $\widehat{\underline{\mathbf{w}}}_{k}^{i}=\underline{\mathbf{w}}_{k}^{i}+\left( \frac{t_i-1}{t_{i+1}} \right) \left( \underline{\mathbf{w}}_{k}^{i}-\underline{\mathbf{w}}_{k}^{i-1} \right)$;\\
$i \leftarrow i + 1$;\\
}
}
Derive $\mathbf{w}\in \mathbb{C} ^{KMN^2}$ from the transformation of \eqref{var_real}.\\
\end{algorithm}

\vspace{-0.3cm}
\subsection{Array-Level Activation for All UEs}\label{array_all_UE}
We will now study the array activation for all UEs, where some arrays are turned into the sleep mode, not serving any UE at all, while the number of UEs served by the remaining activated arrays is controlled. The physical interpretation behind this activation scheme is to not only limit the number of activated arrays, but also limit the average number of UEs served by the activated arrays. By doing so, many parts in the power consumption model can be significantly reduced, such as all parts in the consumed power at AP $P_{m}^{\mathrm{ap}}$ and all parts in the consumed power for the fronthaul link of AP $P_{m}^{\mathrm{fh}}$, by turning particular AP into the sleep mode.

We define the collective weighting vector between the $m$-th array and all $K$ UEs as $\mathbf{w}_m=[ \mathbf{w}_{m1}^{T},\dots ,\mathbf{w}_{mK}^{T} ] ^T\in \mathbb{C} ^{KN^2}$. Thus, we can derive the subsets of the element indexes of $\mathbf{w}_m$ in  $\mathbf{w}$ and $\underline{\mathbf{w}}_{m}=[\Re (\mathbf{w}_{m}^{T}),\Im (\mathbf{w}_{m}^{T})]^T\in \mathbb{R} ^{2KN^2}$ in $\underline{\mathbf{w}}$ as $\mathcal{F} _m=\underset{k=1}{\overset{K}{\cup}}\mathcal{H} _{mk}$ and $\underline{\mathcal{F}}_m=\underset{k=1}{\overset{K}{\cup}}\underline{\mathcal{H}}_{mk}$, respectively. Then, we can define the sparsity-inducing function, which can encourage some arrays' collective weighting vectors to be $\bf{0}$, as 
\vspace{-0.55em}\begin{equation}\label{sparse_term41}
\Omega ( \underline{\mathbf{w}} ) =\omega \sum_{m=1}^M{\| \underline{\mathbf{w}}[ \underline{\mathcal{F}}_m ] \| _2},
\vspace{-0.3em}\end{equation}
where $\omega$ is the regularization factor. We can also involve an additional sparsity-inducing term to limit the number of UEs served by the activated arrays: 
\vspace{-0.55em}\begin{equation}\label{sparse_term4}
\Omega ( \underline{\mathbf{w}} ) =\omega \sum_{q_5=1}^{Q_5}{\| \underline{\mathbf{w}}[ \underline{\mathcal{F}}_{q_5} ] \| _2}+\rho \sum_{q_6=1}^{Q_6}{\| \underline{\mathbf{w}}( \underline{\mathcal{H}}_{q_6} ) \| _2},
\vspace{-0.3em}\end{equation}
where the first term is the $q$-series group representation of \eqref{sparse_term41} with $Q_5=M$ and the second term is derived from $\rho \sum_{k=1}^K{\sum_{m=1}^M{\| \underline{\mathbf{w}}[ \underline{\mathcal{H}}_{mk} ] \| _2}}=\rho \sum_{q_6=1}^{Q_6}{\| \underline{\mathbf{w}}( \underline{\mathcal{H}}_{q_6} ) \| _2}$ with $Q_6=MN$ and $\underline{\mathcal{H}}_{q_6}$ being the concise representation of $\mathcal{H} _{m( q_6 ) k( q_6 )}=\{ ( ( k( q_6 ) -1 ) MN^2+( m\left( q_6 ) -1 ) N^2+1 ) :( ( k( q_6 ) -1 ) MN^2+m( q_6 \right) N^2 ) \} $, where $m( q_6 ) =\mathrm{mod}( q_6-1,M ) +1$ and $k( q_6 ) =\lfloor {(q_6-1)}/{M} \rfloor +1$. Furthermore, we can formulate the sparse optimization problem for the array activation for all UEs as
\vspace{-0.55em}\begin{equation}\label{sparse_problem_4}
\mathcal{P} ^{\mathrm{array},\left( 2 \right)}\!\!:\!\!\!\!\!\!\underset{\underline{\mathbf{w}}\in \mathbb{R} ^{2KMN^2}}{\min}f( \underline{\mathbf{w}} ) +\omega \!\!\sum_{q_5=1}^{Q_5}{\!\!\| \underline{\mathbf{w}}( \underline{\mathcal{F}}_{q_5} ) \| _2}+\rho\!\! \sum_{q_6=1}^{Q_6}{\!\!\| \underline{\mathbf{w}}( \underline{\mathcal{H}}_{q_6} ) \| _2},
\vspace{-0.3em}\end{equation}
where $f( \underline{\mathbf{w}} )$ is given in \eqref{f}. Note that $\mathcal{P} ^{\mathrm{array},\left( 2 \right)}$ in \eqref{sparse_problem_4} cannot be solved in parallel, thus we focus on directly solving \eqref{sparse_problem_4} as follows. It can be observed that $\Omega ( \underline{\mathbf{w}} ) $ in \eqref{sparse_term4} also embraces the tree-structured characteristic, where $\{q_5\}$ and $\{q_6\}$ are the root groups and leaf groups, respectively, since $\underline{\mathcal{H}}_{q_6}$ is included in $\underline{\mathcal{F}}_{q_5}$. Following the methodology in Section~\ref{antenna_all_UE}, we can derive the update of this tree-structured $l_1/l_2$ group as 
\vspace{-0.55em}\begin{equation}\label{update_w_4}
\underline{\mathbf{w}}^{i+1}\gets \mathrm{Prox}_{\mu \omega l_1/l_2}\circ \mathrm{Prox}_{\mu \rho l_1/l_2}( G_f( \widehat{\underline{\mathbf{w}}}^i ) ),
\vspace{-0.3em}\end{equation}
where $\widehat{\underline{\mathbf{w}}}^i$ is computed as \eqref{update_w_hat_2} with $t_{i+1}$ updated as \eqref{update_t_i_1}. First, among the leaf groups $\left\{ q_6 \right\} $, we can update the elements in $\underline{\mathbf{w}}^{i+1}$ with the element index set $\underline{\mathcal{H}}_{q_6}$ as
\vspace{-0.55em}\begin{equation}\label{update_w_leaf_4}
\underline{\mathbf{w}}^{i+1}( \underline{\mathcal{H}}_{q_6} ) \gets [ \mathrm{Prox}_{\mu \rho l_1/l_2}( G_f( \widehat{\underline{\mathbf{w}}}^i ) ) ] _{q_6},
\vspace{-0.3em}\end{equation}
where 
\vspace{-0.55em}\begin{equation}\label{pro_operator_Gf_q6}
\begin{aligned}
&[ \mathrm{Prox}_{\mu \rho l_1/l_2}( G_f( \underline{\widehat{\mathbf{w}}}^{i} ) ) ] _{q_6}\\
&=( 1-{\mu \rho}/{\| [ G_f( \underline{\widehat{\mathbf{w}}}^{i} ) ] _{\underline{\mathcal{H}}_{q_6}} \| _2} ) _+[ G_f( \underline{\widehat{\mathbf{w}}}^{i} ) ] _{\underline{\mathcal{H}}_{q_6}}.
\end{aligned}
\vspace{-0.3em}\end{equation}
Then, for the root groups $\left\{ q_5 \right\} $, we can update the elements in $\underline{\mathbf{w}}^{i+1}$ with the element index set $\underline{\mathcal{F}}_{q_5}$ as 
\vspace{-0.55em}\begin{equation}\label{update_w_root_4}
\underline{\mathbf{w}}^{i+1}( \underline{\mathcal{F}}_{q_5} ) \gets [ \mathrm{Prox}_{\mu \omega l_1/l_2}( G_f( \widehat{\underline{\mathbf{w}}}^i ) ) ] _{q_5}, 
\vspace{-0.3em}\end{equation}
where 
\vspace{-0.55em}\begin{equation}\label{pro_operator_Gf_q5}
\begin{aligned}
&[ \mathrm{Prox}_{\mu \omega l_1/l_2}( G_f( \underline{\widehat{\mathbf{w}}}^{i} ) ) ] _{q_5}\\
&=( 1-{\mu \omega}/{\| [ G_f( \underline{\widehat{\mathbf{w}}}^{i} ) ] _{\underline{\mathcal{F}}_{q_5}} \| _2} ) _+[ G_f( \underline{\widehat{\mathbf{w}}}^{i} ) ] _{\underline{\mathcal{F}}_{q_5}}.
\end{aligned}
\vspace{-0.3em}\end{equation}
The other update steps can be easily derived based on the approach in Section~\ref{antenna_all_UE} by replacing the group characteristics in Algorithm~\ref{algo_2} with the group characteristics in this part. Thus, we present the algorithm for solving $\mathcal{P} ^{\mathrm{array},\left( 2 \right)}$ in \eqref{sparse_problem_4} as Algorithm~\ref{algo_4}, where $\omega=\bar{\omega}\omega_{\max}$ with $\omega_{\max}=2\| \boldsymbol{\xi }( \mathcal{F} _{q_5}) \| _2$, $\rho=\bar{\rho}\rho_{\max}$ with $\rho_{\max}=2\| \boldsymbol{\xi }( \mathcal{H} _{q_6}) \| _2$, and $\bar{\omega}\in [ 0,1 ] $ and $\bar{\rho}\in [ 0,1 ]$ denote the relative regularization ratios for the leaf and root groups, respectively. We have $\mu =1/\left( 2\delta _{\max}\left( \mathbf{\Gamma } \right) \right) $.

\vspace{-0.1cm}
\section{Practical Implementation, Complexity, Convergence, and Fronthaul Analysis}
In this section, we deliver discussions about the practical implementation aspects, computational complexity, and convergence analysis for the studied schemes.

\vspace{-0.4cm}
\subsection{Practical Implementation}
In this part, we discuss the proposed schemes from the perspective of scalability, implementation feasibility, and industry discussions as in the following remarks.
\begin{rem}
The proposed sparse activation methods are motivated by the user-centric selective-service principle in scalable CF mMIMO \cite{9064545}, where each UE is served by a subset of APs rather than by all APs. However, the methods developed in this paper are not claimed to constitute strictly scalable implementations under the definition in \cite{9064545}, which requires the per-AP signal processing, fronthaul signaling, and associated resource requirements to remain finite as the number of UEs grows to infinity. 
\end{rem}

\vspace{-0.3cm}
\begin{rem}
The proposed antenna/array activation should be interpreted as a slow-timescale network reconfiguration policy, instead of a rapid per-slot hardware switching mechanism. Since the sparse activation design is built upon the long-term statistics-based weighting matrices $\mathbf{W}_{mk}$, the activation pattern only needs to be updated when the large-scale channel statistics change noticeably, rather than at every coherence block. Therefore, in practice, the sparse optimization can be implemented at the CPU based on long-term statistical information, while the resulting activation pattern is reused over many coherence blocks.
\end{rem}
\vspace{-0.3cm}

\begin{rem}
The considered sparse activation problem is relevant not only from a theoretical viewpoint, but also from practical standardization and industrial perspectives. Public third generation partnership project (3GPP) materials highlight network energy saving as an important direction in new radio, with antenna muting being one representative mechanism\footnote{https://www.3gpp.org/technologies/deep-dive/ee-article}. Industry has also emphasized energy-efficient radio operation through sleep modes and antenna-port adaptation in radio access networks\footnote{https://www.ericsson.com/en/reports-and-papers/white-papers/5g-advanced-evolution-towards-6g}. Nevertheless, antenna-level inactivation within an activated array should be viewed as a fine-grained energy-saving mechanism, rather than as a substitute for whole-array sleep, since the array-level fixed-power terms may remain active and may play a more important role in the overall system energy consumption and EE, as will be illustrated by numerical results.
\end{rem}

\vspace{-0.7cm}
\subsection{Computational Complexity}
In this part, we deliver the computational complexity analysis for the proposed schemes, including the OBE weighting scheme and four sparse activation schemes.
For the OBE weighting scheme in Proposition~\ref{prop_obe_weighting}, its dominant statistical precomputation cost is $\mathcal{O}(M^2K^2N^4I_r)$, where $I_r$ denotes the number of random realizations for the approximation of the required expectations by Monte-Carlo averaging. Besides, the subsequent weighting vector computational complexity is $\mathcal{O}(M^3KN^6)$. Therefore, the total computational complexity of the OBE weighting is $\mathcal{O}(M^2K^2N^4I_r + M^3KN^6)$. For the LSFD weighting scheme in \eqref{lsfd_eq}, the dominant statistical precomputation cost comes from constructing $\bm{\Xi}_{kl}$, which requires $\mathcal{O}(M^2K^2I_r)$. And then, the LSFD weighting vector is obtained with an additional complexity of $\mathcal{O}(M^3K)$. Therefore, the total complexity of LSFD is $\mathcal{O}(M^2K^2I_r + M^3K)$.

\vspace{-0.1cm}
\begin{rem}
Although the OBE weighting incurs a higher computational complexity than the LSFD weighting, this additional burden remains manageable in practice, since it is confined to slow-timescale updates based on long-term statistics rather than per-coherence-block online processing. More importantly, in contrast to the scalar LSFD weighting, the matrix-valued OBE weighting structure can exploit the antenna-domain degrees of freedom more effectively and explicitly shape the contribution of individual antenna elements, thereby enabling finer-grained weighting and improved performance.
\end{rem}

For the proposed four sparse activation algorithms, the per-iteration complexity mainly consists of the gradient update, the proximal update, and the FISTA extrapolation. The following complexity analysis only counts the iterative computational cost of Algorithms~\ref{algo_1} to Algorithms~\ref{algo_4}, while assuming that the long-term statistical quantities $\bm{\Gamma}$ and $\bm{\xi}$ have been precomputed. For the UE-specific schemes, i.e., Algorithm~\ref{algo_1} and Algorithm~\ref{algo_3}, the dominant cost in each iteration comes from the gradient computation involving the matrix-vector multiplication with $\bm{\Gamma}_k$, whose complexity is $\mathcal{O}(M^2N^4)$. The proximal update and the FISTA extrapolation only require group-wise operations and vector updates, both having a lower complexity order $\mathcal{O}(MN^2)$. Hence, the per-iteration complexity of each UE-specific subproblem is $ \mathcal{O}(M^2N^4+MN^2)=\mathcal{O}(M^2N^4)$, which is dominated by the gradient step. If all $K$ UE subproblems are solved serially, the overall per-iteration complexity becomes $ \mathcal{O}(M^2KN^4+MKN^2)=\mathcal{O}(M^2KN^4)$. If parallel processing across UEs is available, the effective wall-clock per-iteration complexity remains $\mathcal{O}(M^2N^4)$.

For the network-wide schemes, i.e., Algorithm~\ref{algo_2} and Algorithm~\ref{algo_4}, by exploiting the block-diagonal structure of $\bm{\Gamma}$, the dominant cost is similarly the gradient computation, which has complexity $\mathcal{O}(M^2KN^4)$, while the proximal update and the FISTA extrapolation have $\mathcal{O}(MKN^2)$ computational complexity. Therefore, the per-iteration complexity of the network-wide schemes is $\mathcal{O}(M^2KN^4+MKN^2)=\mathcal{O}(M^2KN^4)$, which is again dominated by the gradient step. Overall, under serial implementation, all four sparse activation algorithms have the same per-iteration complexity, namely $\mathcal{O}(M^2KN^4)$, since the gradient computation is the dominant step. Their practical computational burden, however, is still different: the UE-specific schemes are more implementation-friendly due to parallelizability across UEs, while the antenna-level schemes involve finer group structures than the array-level schemes.

\vspace{-0.4cm}
\subsection{Convergence Analysis}
In this part, we deliver the convergence proof for the proposed sparse activation algorithms. We notice that sparse optimization problems solved by Algorithm~\ref{algo_1} to Algorithm~\ref{algo_4} all belong to the standard composite convex form $\underset{ \underline{\mathbf{w}}}{\min}\,\,F( \underline{\mathbf{w}})\triangleq f( \underline{\mathbf{w}})+\Omega ( \underline{\mathbf{w}})$, which is exactly the standard formulation considered in \cite[Sec.~5.3.3.1]{rish2014sparse}. Here, the smooth term is the quadratic function
$f(\underline{\mathbf{w}})=\underline{\mathbf{w}}^T\underline{\mathbf{\Gamma}}\underline{\mathbf{w}}-2\underline{\mathbf{w}}^T\underline{\bm{\xi}}$
while the nonsmooth term $\Omega(\underline{\mathbf{w}})$ is either a standard group $\ell_{1}/\ell_{2}$ penalty for Algorithm~\ref{algo_1} and Algorithm~\ref{algo_3}, or a tree-structured hierarchical $\ell_{1}/\ell_{2}$ penalty for Algorithm~\ref{algo_2} and Algorithm~\ref{algo_4}. Since $\underline{\mathbf{\Gamma}}\succeq \mathbf 0$, the quadratic function $f(\underline{\mathbf{w}})$ is convex, and its gradient $\nabla f(\underline{\mathbf{w}})=2\underline{\mathbf{\Gamma}}\underline{\mathbf{w}}-2\underline{\bm{\xi}}$ is affine and therefore Lipschitz continuous. Hence, the smooth part of all four problems satisfies the basic assumption required by the proximal-gradient and FISTA framework, as shown in \cite[Sec.~5.3.3]{rish2014sparse}.

Moreover, for Algorithm~\ref{algo_1} and Algorithm~\ref{algo_3}, the update is based on a standard group $\ell_{1}/\ell_{2}$ regularizer, whose proximal operator is available in closed-form through group soft-thresholding applied independently to each group, as in \cite[Sec.~5.3.3.3]{rish2014sparse}. Therefore, the group-wise updates in Algorithm~\ref{algo_1} and Algorithm~\ref{algo_3} compute the exact proximal operator of the corresponding nonsmooth penalty. For Algorithm~\ref{algo_2} and Algorithm~\ref{algo_4}, the involved groups are tree-structured, namely, any two groups are either disjoint or one is included in the other. In this case, the overall proximal operator is again exact and can be written as an ordered composition of group-wise proximal operators, as shown in \cite[Eq.~(3.8)]{bach2012optimization}. Therefore, the leaf-to-root updates in Algorithm~\ref{algo_2} and Algorithm~\ref{algo_4} also amount to an exact proximal step.
Consequently, Algorithms~\ref {algo_1} to ~\ref{algo_4} are all instances of the standard proximal-gradient method for convex composite optimization, and their accelerated versions are standard FISTA-type implementations. By the classical convergence results of proximal methods in \cite{rish2014sparse} and \cite{bach2012optimization}, the non-accelerated proximal-gradient scheme enjoys the global objective convergence rate $\mathcal{O}(1/i)$, whereas the accelerated FISTA scheme improves it to $\mathcal{O}(1/i^{2})$, as provided in \cite[Sec.~5.3.3.2]{rish2014sparse} and \cite[Sec.~3.2]{bach2012optimization}, with $i$ being the iteration number. It is worth emphasizing that the $\mathcal{O}(1/i^{2})$ result is the convergence rate of the objective values, which is the standard guarantee for accelerated proximal methods.

\vspace{-0.5cm}

\subsection{Fronthaul Load Saving}
In the considered processing architecture, the fronthaul transmission is mainly associated with the local soft estimates delivered from the serving APs to the CPU. To explicitly quantify the fronthaul load reduction brought by the proposed sparse activation schemes, we denote the fronthaul load per coherence block as
$\mathcal{B} _{\mathrm{fh}}=2\zeta _{\mathrm{fh}}\tau _u\sum_{m=1}^M{\sum_{k=1}^K{b_{mk}}}$, where $\zeta_{\mathrm{fh}}$ denotes the number of quantization bits for each real or imaginary part, $\tau_u$ is the number of uplink data symbols within one coherence block, and $b_{mk}\in\{0,1\}$ indicates whether AP $m$ serves UE $k$, as defined in Sec.~\ref{sec_power_model}. This expression shows that the fronthaul load can be reduced by deactivating AP-UE serving links through sparse activation, which is reflected by $b_{mk}=0$. By contrast, under the full-activation benchmark, all APs serve all UEs, and the corresponding fronthaul load becomes $\mathcal{B}_{\mathrm{fh}}^{\mathrm{full}}=2\zeta_{\mathrm{fh}}\tau_u MK$. Accordingly, the fronthaul bits saved by the sparse activation scheme can be expressed as $\Delta \mathcal{B}_{\mathrm{fh}}=\mathcal{B}_{\mathrm{fh}}^{\mathrm{full}}-\mathcal{B}_{\mathrm{fh}}=2\zeta_{\mathrm{fh}}\tau_u(MK-\sum_{m=1}^{M}\sum_{k=1}^{K} b_{mk})$.

% \begin{figure*}[htbp]
% 	\centering
% 	\begin{minipage}{0.32\linewidth}
% 		\centering
%         \includegraphics[width=1.05\columnwidth]{FIG_OBE_N_taup5.eps}\vspace*{-0.3cm}
%         \caption{Average SE per UE for the OBE weighting and LSFD weighting schemes over different $N$ with $M=10$ and $K=10$. \label{Fig_OBE_N}
%         \vspace*{+0.05cm}}
% 	\end{minipage}
% 	%\qquad
% 	\begin{minipage}{0.32\linewidth}
% 		\centering
%         \includegraphics[width=1.05\columnwidth]{FIG_OBE_taup.eps}\vspace*{-0.3cm}
%         \caption{Average SE per UE for the OBE weighting and LSFD weighting schemes over different $\tau_p$ with $M=10$, $K=10$, and $N=9$. \label{Fig_OBE_taup}
%         \vspace*{+0.05cm}}
% 	\end{minipage}
%  \begin{minipage}{0.32\linewidth}
% 		\centering
%         \includegraphics[width=1.05\columnwidth]{FIG_Antenna_Per_UE_SE_EE_251222.eps}\vspace*{-0.3cm}
%         \caption{Average SE per UE and average system EE for the antenna-level activation for the particular UE in Sec.~\ref{sec_antenna_per_UE} with $M=10$, $K=10$, and $N=9$, where the dots are drawn with the relative sparsity ratio $\bar{\lambda}$ varying from $0.05$ to $1$ from right to left with the step $0.05$.
%         \label{Fig_Antenna_per_UE}}
% 	\end{minipage}\vspace*{-0.3cm}
% \end{figure*}

\vspace{-0.2cm}
\section{Numerical Results}
It is worth noting that the technical derivations and the proposed optimization framework are not limited to the specific AP-UE deployment strategy and channel model considered in the simulations. They are applicable to more general AP-UE geometries and channel models as long as the required channel statistics are available. Therefore, the following results are provided only as a representative numerical example.
We will evaluate the potential power savings from antenna and array inactivation by considering a CF mMIMO network with the coverage area being $200\times 200 \, \mathrm{m}^2$, where the center of the coverage area is defined as the origin and the wall, in which the APs are uniformly deployed, is located in the $x-y$ plane with the coordinate in the $z$-axis being $-100\, \mathrm{m}$. Note that the $x$-axis and $y$-axis are parallel and perpendicular to the coverage area, respectively, and the $y$-axis coordinate for the coverage area is $0$. UEs are randomly distributed in the coverage area. We define the coordinates of the bottom left point of the $m$-th array as $\mathbf{r}_m=[x_m,y_m,z_m]^T\in \mathbb{R} ^3$ with $y_m=L_{AP}$  being the array height. Counting row-by-row from the bottom left, we can derive the coordinates of the $n$-th antenna of the $m$-th array as $\mathbf{r}_{mn}=[r_{mn,x},r_{mn,y},r_{mn,z}]^T=[x_m+\mathrm{mod(}n-1,N_H)\Delta _H,L_{BS}+\lfloor (n-1)/N_H \rfloor \Delta _H,z_m]^T\in \mathbb{R} ^3$, where $\Delta _H$ and $\Delta _V$ represent the horizontal and vertical antenna spacing, respectively. Moreover, we denote the coordinates of the $k$-th UE as $\mathbf{s}_k=[s_{k,x},s_{k,y},s_{k,z}] ^T\in \mathbb{R} ^3$ with $s_{k,y}=L_{UE}$ being the UE height.
We consider the widely studied half-wavelength antenna spacing scenario, where $\Delta _H=\Delta _V=\bar{\lambda}/2$ with $\bar{\lambda}$ being the wavelength. For the channel modeling, we  model $\overline{\mathbf{h}}_{mk}$ based on the standard spherical wave channel model \cite{2023arXiv231011044L,CM25CKJ} as 
$
\overline{\mathbf{h}}_{mk}=\sqrt{\beta _{mk}^{\mathrm{LoS}}}[ e^{-j\frac{2\pi}{\bar{\lambda}}(d_{m1k}-d_{mk})},\dots ,e^{-j\frac{2\pi}{\bar{\lambda}}(d_{mNk}-d_{mk})} ] ^T,
$
where $d_{mnk}=\left\| \mathbf{r}_{mn}-\mathbf{s}_k \right\|$  represents the distance between the $n$-th antenna of the $m$-th array and the $k$-th UE, $d_{mk}=\left\| \mathbf{r}_{m}-\mathbf{s}_k \right\|$ denotes the reference distance between the $m$-th array and the $k$-th UE, and $\beta _{mk}^{\mathrm{LoS}}$ is the large-scale fading coefficient between the $m$-th array and the $k$-th UE. As for $\mathbf{R}_{mk}$, we assume that an isotropic scattering environment is considered and the $(n_1,n_2)$-th element of $\mathbf{R}_{mk} $ can be modeled based on \cite{9300189} as 
$
\left[ \mathbf{R}_{mk} \right] _{n_1n_2}=\beta _{mk}^{\mathrm{NLoS}}\mathrm{sinc}\left( 2\left\| \mathbf{r}_{mn_1}-\mathbf{r}_{mn_2} \right\| /\bar{\lambda} \right)
$
where $\mathbf{r}_{mn_1}$ denotes the coordinates of the $n_1$-th antenna of the $m$-th array and $\beta _{mk}^{\mathrm{NLoS}}$ denotes the large-scale fading coefficient for the NLoS component. Moreover, we have $\beta _{mk}^{\mathrm{LoS}}=\frac{\kappa _{mk}}{\kappa _{mk}+1}\beta _{mk}$ and $\beta _{mk}^{\mathrm{NLoS}}=\frac{1}{\kappa _{mk}+1}\beta _{mk}$, where $\kappa _{mk}=10^{1.3-0.003d_{mk}}$ denotes the Rician $\kappa$ factor between AP $m$ and UE $k$ and $\beta _{mk}$ is the large-scale fading coefficient  between AP $m$ and UE $k$, which can be modeled based on the 3GPP COST 231 Walfish-Ikegami model as \cite[Eq. (5.2-3)]{3GPP2024}. Besides, we have $L_{AP}=12.5 \, \mathrm{m}$ and $L_{UE}=1.5 \, \mathrm{m}$. We have the carrier frequency $f_c=1.9 \, \mathrm{GHz}$, the bandwidth $\mathcal{B}=20\, \mathrm{MHz}$, $\tau _c=200
$, $\tau _p=5$ (otherwise mentioning), $\sigma ^2=-94 \, \mathrm{dBm}$. For the parameters in the power consumption model, unless mentioned, for the UE power consumption model, we have $P_{k}^{\mathrm{ue},\mathrm{c}}=0.1\, \mathrm{W}$,  $p_{k,p}=p_k=0.2 \,\mathrm{W}$, and $\eta _{\mathrm{ue}}=0.6$. For the AP power consumption, we have $P_{m}^{\mathrm{array}}=4 \, \mathrm{W}$, $P_{mn}^{\mathrm{ant},\mathrm{c}}=0.2\, \mathrm{W}$, and $P_{mnk}^{\mathrm{pro}}=0.4 \,\mathrm{W}$. For the consumed power in the fronthaul links, we have $P_{m}^{\mathrm{fh},\mathrm{fix}}=0.4 \, \mathrm{W}$ and $P_{mk}^{\mathrm{sig}}=0.01\, \mathrm{W}$. For the power consumption in the CPU, we have $P^{\mathrm{cpu},\mathrm{fix}}=2 \,\mathrm{W}$ and $P^{\mathrm{cpu},\mathrm{dec}}=0.8 \, \mathrm{W}/\left( \mathrm{Gbit}/\mathrm{s} \right) $ motivated from the parameter choice in \cite{8187178,10113887}.

As for the choice of simulated sparse weighting schemes, we exploit the fact that the sparse optimization algorithms described above generate weighting vectors in which certain elements are driven to zero. These sparsity patterns directly indicate the feasible antenna/array-UE activation modes. Consequently, we can tailor the optimal weighting scheme in \eqref{opt_weighting} by setting the entries corresponding to inactive antenna/array-UE modes to zero, and employ these tailored weighting schemes to evaluate the achievable performance. Thus, the sparse optimization algorithms mainly serve as a mechanism for exploring the activation modes. Moreover, both the L-MMSE and L-MR combining schemes are considered as representative local processing strategies. The L-MMSE combining serves as a stronger benchmark because of its superior local interference suppression capability, whereas the L-MR combining is a lower-complexity scheme that is sensitive to strong multi-user interference. Nevertheless, retaining the L-MR combining is still interesting, since it helps reveal the behavior of the proposed sparse activation schemes under a weaker local combining architecture and thus complements the results obtained under the stronger L-MMSE benchmark.
In the following, we first validate the performance gain of the OBE weighting scheme, then investigate the sparse activation behaviors at the antenna and array levels separately, and finally provide an overall comparison together with the studies for the impact of different system and modeling settings. Note that the following simulation results are obtained via Monte-Carlo methods over $30$ random UE-location setups, each with $800$ random channel realizations.

\vspace{-0.3cm}
\subsection{OBE Weighting Gain}

Fig.~\ref{Fig_OBE_all} (a) shows the average SE for the OBE weighting and LSFD weighting schemes for different $N$. We observe that the OBE scheme consistently outperforms the LSFD scheme, and the SE improvement for the OBE scheme compared to the LSFD scheme becomes larger as $N$ increases. This is because the OBE scheme can efficiently exploit the spatial correlation to design the antenna-specific weighting scheme to enhance the performance, while the LSFD scheme only utilizes the scalar weighting coefficient for all antennas in each array. Moreover, the OBE scheme yields a larger SE gain over LSFD when L-MR combining is used than when L-MMSE combining is used. This is because the L-MR combining has a weaker inherent ability to suppress interference. The involvement of the OBE scheme can help the L-MR combining to conquer this deficiency to also achieve excellent SE performance. We also observe that the combination of the L-MR combining and OBE scheme can achieve performance approaching the combination of the L-MMSE combining and LSFD scheme, highlighting that the powerful system-level cooperation enabled by the OBE can make the low-complexity local processing scheme, such as the L-MR, competitive with substantially more complex baselines, such as the L-MMSE.

Fig.~\ref{Fig_OBE_all} (b) investigates the impact of $\tau_p$ on the SE performance. As $\tau_p$ increases, the pilot contamination degrades, and thus, the quality of channel estimates becomes higher. For the L-MMSE combining, we can observe that the performance gap between the LSFD and OBE weighting schemes becomes smaller as $\tau_p$ increases. This is because the L-MMSE combining has a more powerful interference suppression capability. Thus, the OBE scheme performs more effectively in the scenario with severe pilot contamination. For the L-MR combining, the performance gap between the LSFD and OBE schemes becomes larger as $\tau_p$ increases, which demonstrates that the OBE scheme can more efficiently utilize the benefit of high-quality channel estimates than the LSFD scheme under the local processing scheme with poor interference suppression ability. However, the OBE scheme cannot ``save" the L-MR combining in the scenario with high-quality channel estimates, where the performance gap between the L-MR and L-MMSE combining increases as $\tau_p$ increases. This indicates that once CSI is sufficiently accurate, the performance ceiling is primarily decided by the interference suppression capability of the local processing scheme.

\begin{figure}[t]
\centering
\includegraphics[width=0.8\columnwidth]{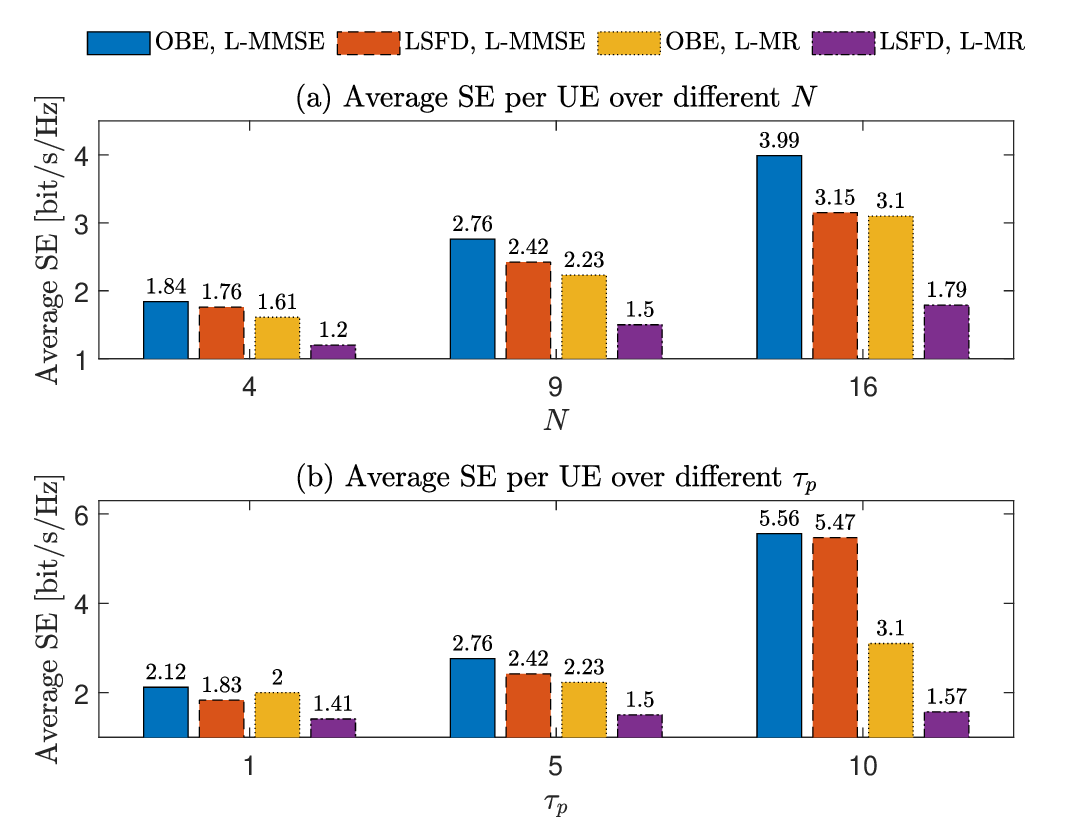}\vspace*{-0.3cm}
\caption{Average SE per UE for the OBE and LSFD weighting schemes over different $N$ with $M=10$ and $K=10$ in subplot (a) and over different $\tau_p$ with $M=10$, $K=10$, and $N=9$ in subplot (b), respectively. \label{Fig_OBE_all}
\vspace{-0.15cm}}
\end{figure}

\begin{figure}[t]
\centering
\includegraphics[width=0.8\columnwidth]{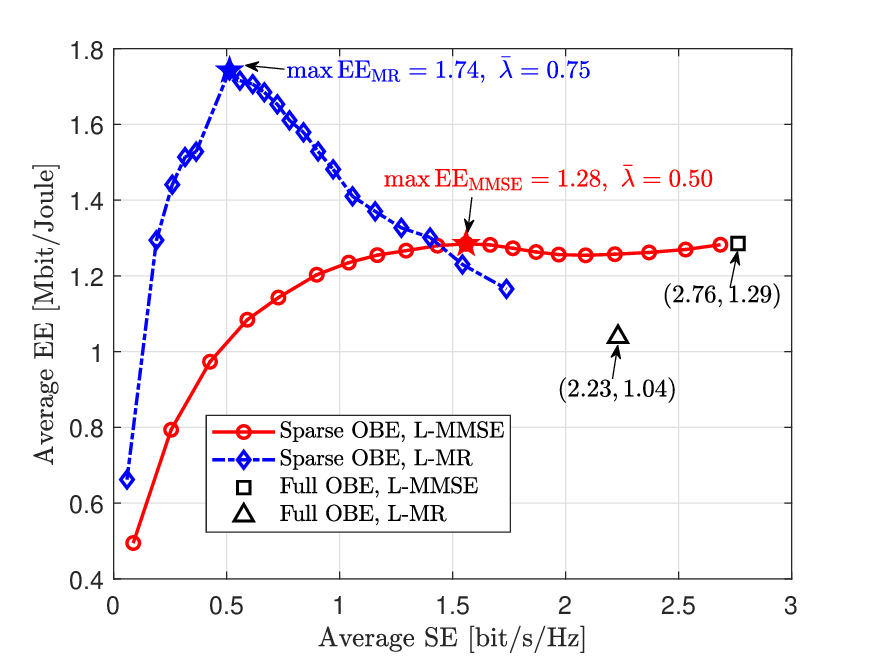}\vspace*{-0.3cm}
\caption{Average SE per UE and average system EE for the antenna-level activation for the particular UE in Sec.~\ref{sec_antenna_per_UE} with $M=10$, $K=10$, and $N=9$, where the dots are drawn with the relative sparsity ratio $\bar{\lambda}$ varying from $0.05$ to $1$ from right to left with the step $0.05$.
\label{Fig_Antenna_per_UE}}
\vspace{-0.25cm}
\end{figure}

\begin{figure}[t]\centering
\subfigure[L-MMSE combining]{
\begin{minipage}{8cm}\centering
\includegraphics[width=0.8\columnwidth]{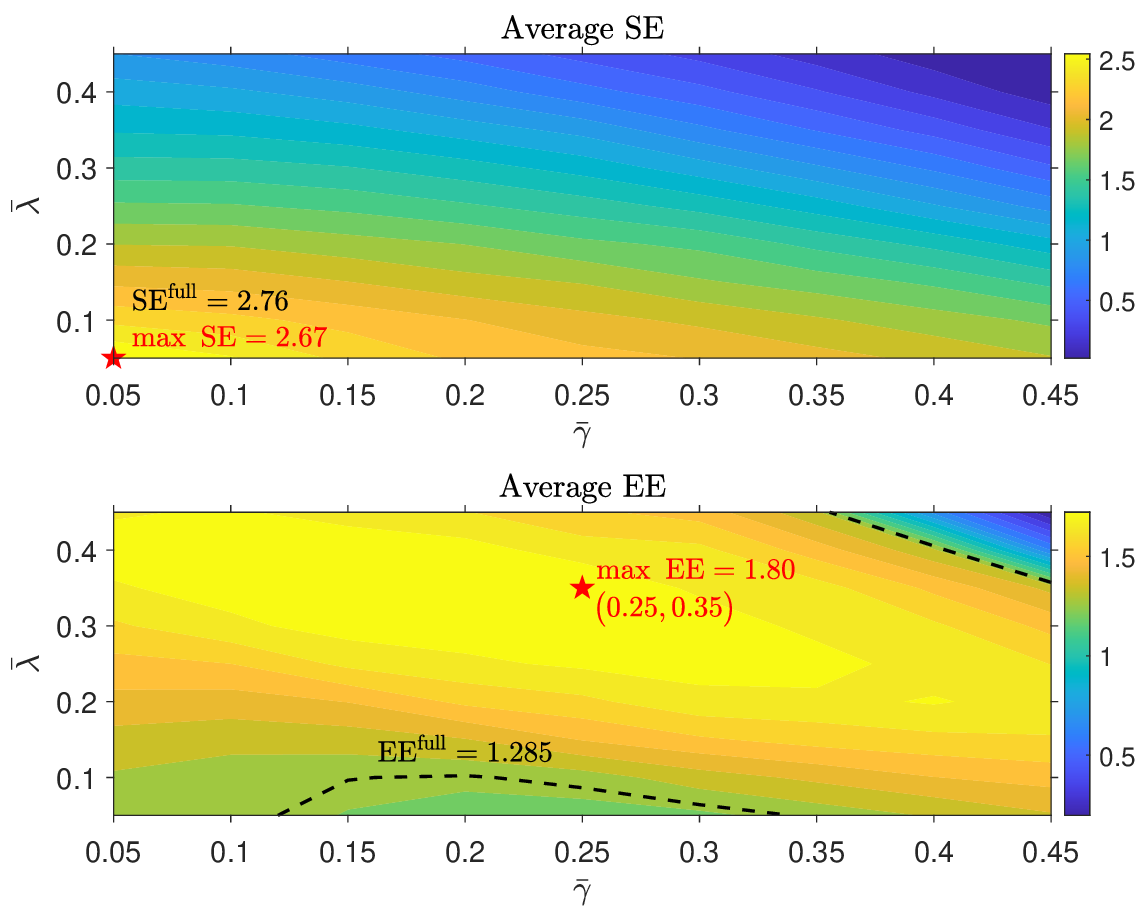}
\end{minipage}}
\subfigure[L-MR combining]{
\begin{minipage}{8cm}\centering
\includegraphics[width=0.8\columnwidth]{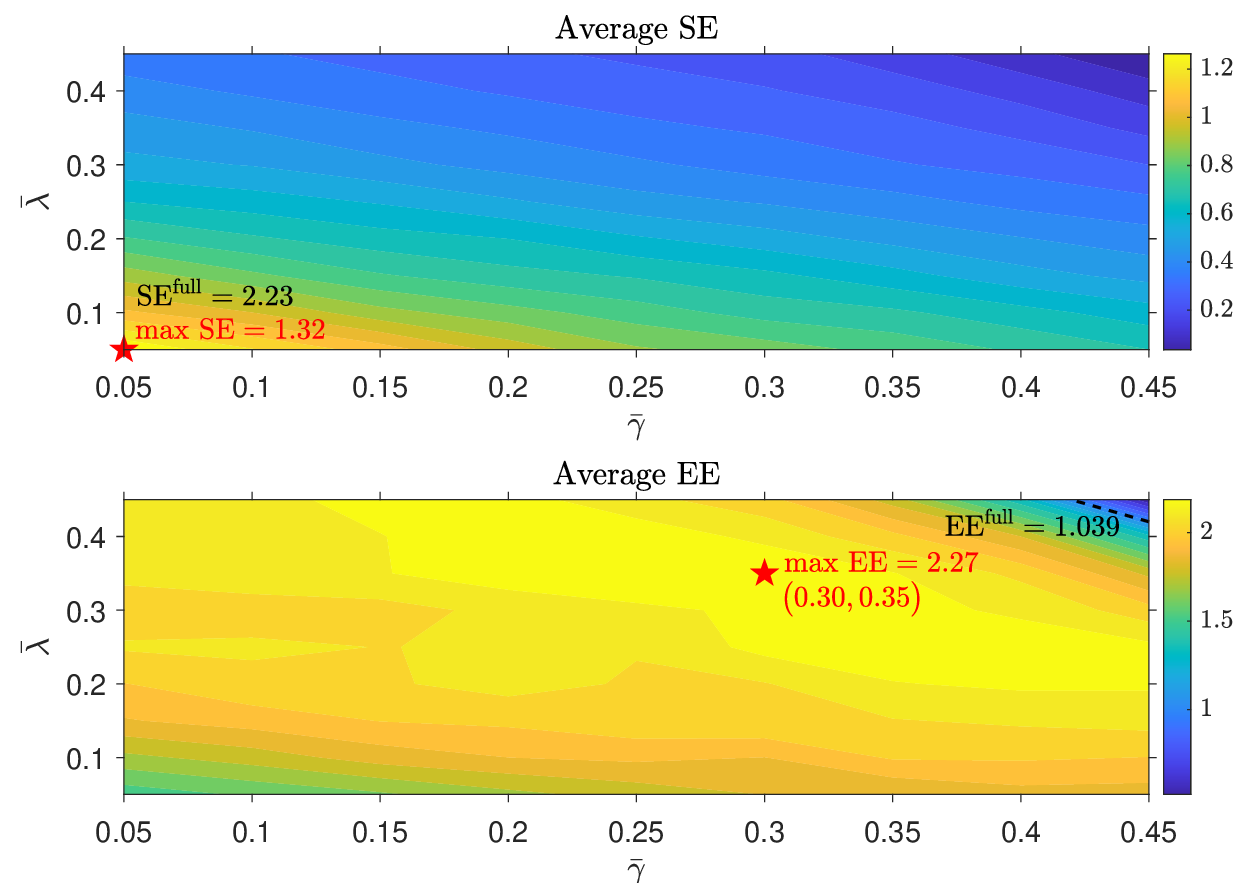}
\end{minipage}}\vspace{-0.25cm}
\caption{Average SE per UE and average system EE for the antenna-level activation for all UEs in Sec.~\ref{antenna_all_UE} with different values of the relative sparsity ratios $\{\bar{\gamma},\bar{\lambda}\}$ varying from $0.05$ to $0.45$ with the step $0.05$ over the L-MMSE and L-MR combining with $M=10$, $K=10$, and $N=9$. \label{Fig_antenna_all}\vspace*{-0.2cm}}
\end{figure}

\begin{figure}[t]
\centering
\includegraphics[width=0.82\columnwidth]{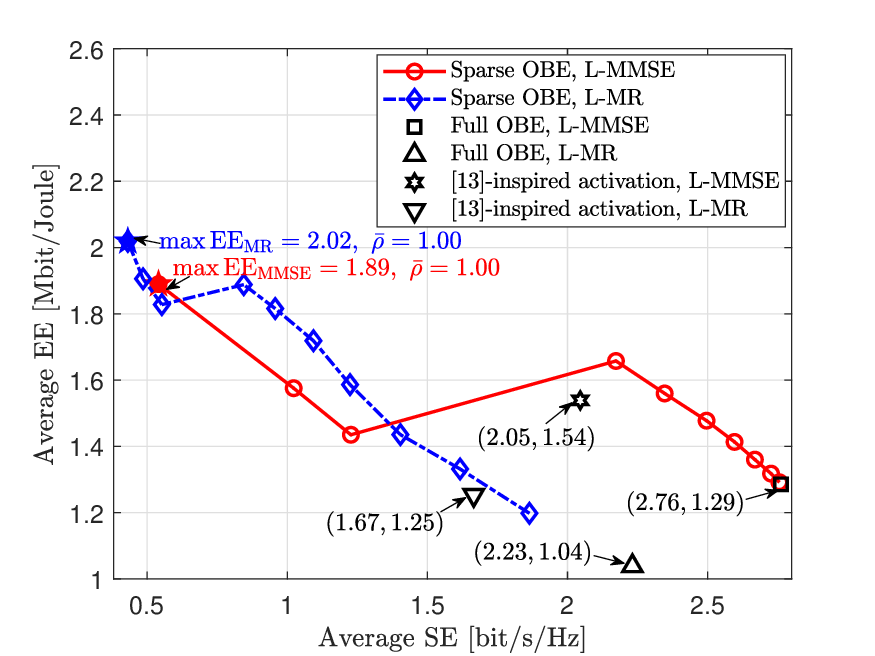}\vspace{-0.4cm}
\caption{Average SE per UE and average system EE for the array-level activation for the particular UE in Sec.~\ref{sec_antenna_per_UE} with $M=10$, $K=10$, and $N=9$, where the dots are drawn with the relative sparsity ratio $\bar{\rho}$ varying from $0.1$ to $1$ from right to left with the step $0.1$. \label{Fig_array_per}}
\vspace{-0.3cm}
\end{figure}

\vspace{-0.5cm}
\subsection{Antenna-Level Sparse Activation Evaluation}
\vspace{-0.2cm}
Fig.~\ref{Fig_Antenna_per_UE} studies the SE-EE performance for the sparse antenna-level activation scheme for the particular UE, introduced in Sec.~\ref{sec_antenna_per_UE}. We can observe that, as $\bar{\lambda}$ decreases, which can be reflected from the dots from left to right, the EE first increases and then decreases or stays approximately stationary for L-MMSE or L-MR combining, respectively. We can observe that with L-MMSE combining, when $\bar{\lambda}$ is small, the EE for the sparse activation scheme remains approaching that of the full activation scheme. This is because although some antenna-UE links are inactivated, the decrease of power consumption induced by this sparsity is not dominant in the total power consumption, so that the EE may not be clearly benefited due to the consequent SE decrease. L-MR is more sparsity-sensitive than L-MMSE because it relies mainly on coherent array gain and offers little interference suppression, and thus, deactivating even a small fraction of antennas significantly reduces the desired signal gain, yielding a sharp SE degradation. But the L-MR combining can achieve significant EE improvement over the benchmark when the sparsity level is properly selected, where many antenna-UE links, even the whole antenna or array, can be turned off, and thus, the power consumption significantly degrades.

% \begin{figure}[t]
% \centering
% \includegraphics[width=0.825\columnwidth]{FIG_Array_Per_UE_SE_EE_251222.eps}\vspace{-0.4cm}
% \caption{Average SE per UE and average system EE for the array-level activation for the particular UE in Sec.~\ref{sec_antenna_per_UE} with $M=10$, $K=10$, and $N=9$, where the dots are drawn with the relative sparsity ratio $\bar{\rho}$ varying from $0.1$ to $1$ from right to left with the step $0.1$. \label{Fig_array_per}}
% \vspace{-0.3cm}
% \end{figure}

\begin{figure}[t]\centering
\vspace{0.3cm}
\subfigure[L-MMSE combining]{
\begin{minipage}{8cm}\centering
\includegraphics[width=0.8\columnwidth]{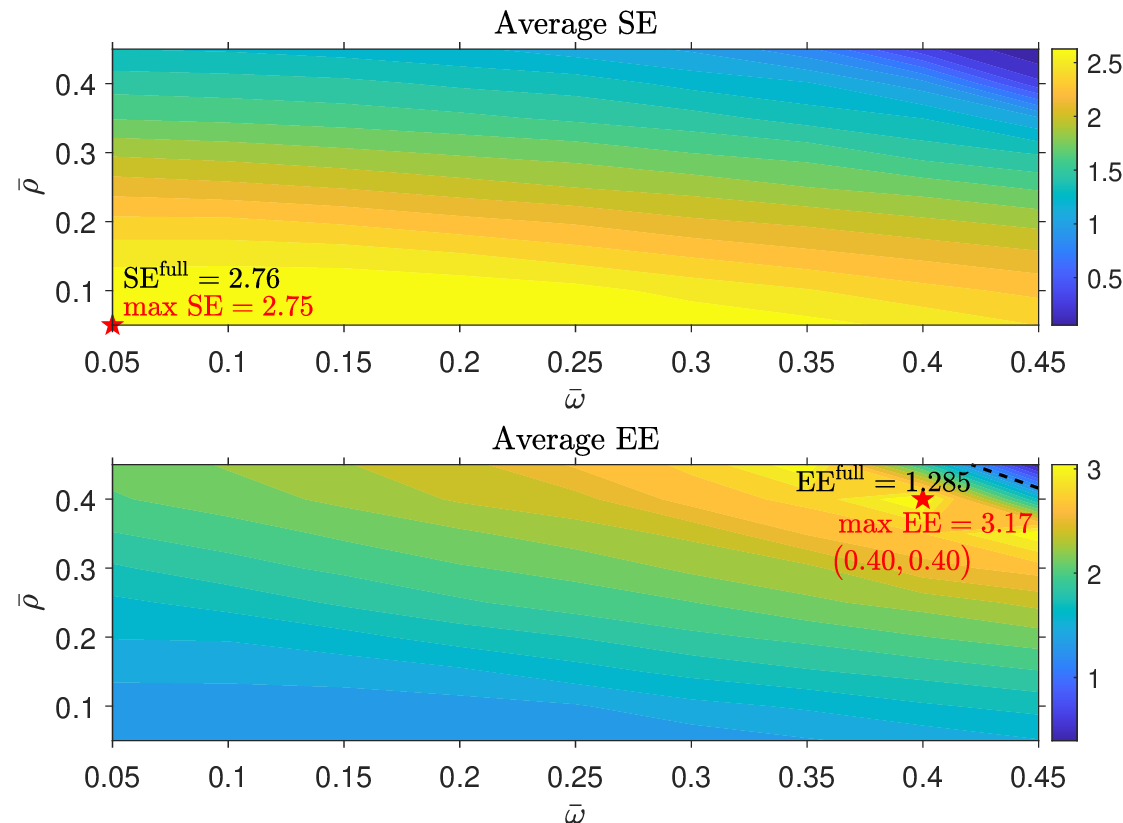}\vspace{+0.1cm}
\end{minipage}}
\subfigure[L-MR combining]{
\begin{minipage}{8cm}\centering
\includegraphics[width=0.8\columnwidth]{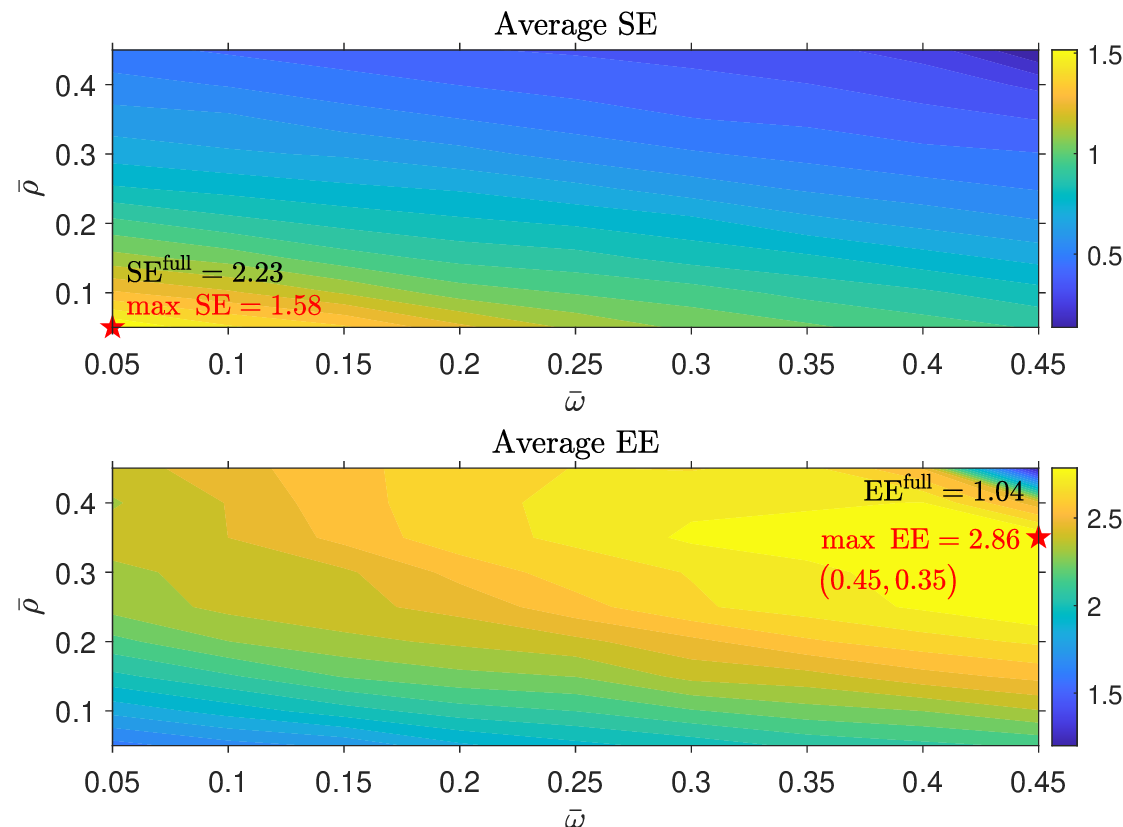}\vspace{+0.1cm}
\end{minipage}}\vspace{-0.25cm}
\caption{Average SE per UE and average system EE for the array-level activation for all UEs in Sec.~\ref{array_all_UE} with different values of the relative sparsity ratios $\{\bar{\omega},\bar{\rho}\}$ varing from $0.05$ to $0.45$ with the step $0.05$ over the L-MMSE and L-MR combining with $M=10$, $K=10$, and $N=9$. \label{Fig_array_all}
\vspace{-0.15cm}}
\end{figure}

Fig.~\ref{Fig_antenna_all} illustrates the average SE per UE and average system EE for the sparse antenna-level activation scheme for all UEs in Sec.~\ref{antenna_all_UE}. For both the L-MMSE and L-MR combining, the average SE decreases monotonically with either $\bar{\gamma}$ or $\bar{\lambda}$, since increasing sparsity inactivates more antennas and antenna–UE links. This degradation is significantly more severe for L-MR, consistent with Fig.~\ref{Fig_Antenna_per_UE}. The average EE exhibits a unimodal behavior and attains its maximum at moderate sparsity levels, where the peak EE substantially exceeds the full-activation benchmark for both combining schemes. This indicates that a moderate increase of $\{\bar{\gamma},\bar{\lambda}\}$ effectively reduces antenna-level power consumption faster than the corresponding SE degradation, whereas overly aggressive sparsification (large $\bar{\gamma}$ and $\bar{\lambda}$) causes a sharp EE drop due to severely degraded SE. Moreover, we can observe that the SE degradation and the EE improvement are predominantly governed by the leaf sparse factor $\bar{\lambda}$, where the SE and EE variations along the $\bar{\lambda}$ axis are stronger than those of the $\bar{\gamma}$ axis. The EE enhancement capability of this scheme is more effective than the scheme in Sec.~\ref{sec_antenna_per_UE} since the power consumption reduction in this scheme is more notable across the total power consumption.

\begin{figure}[t]\centering
\vspace{0.3cm}
\subfigure[Achievable performance]{
\begin{minipage}{8cm}\centering
\includegraphics[width=0.9\columnwidth]{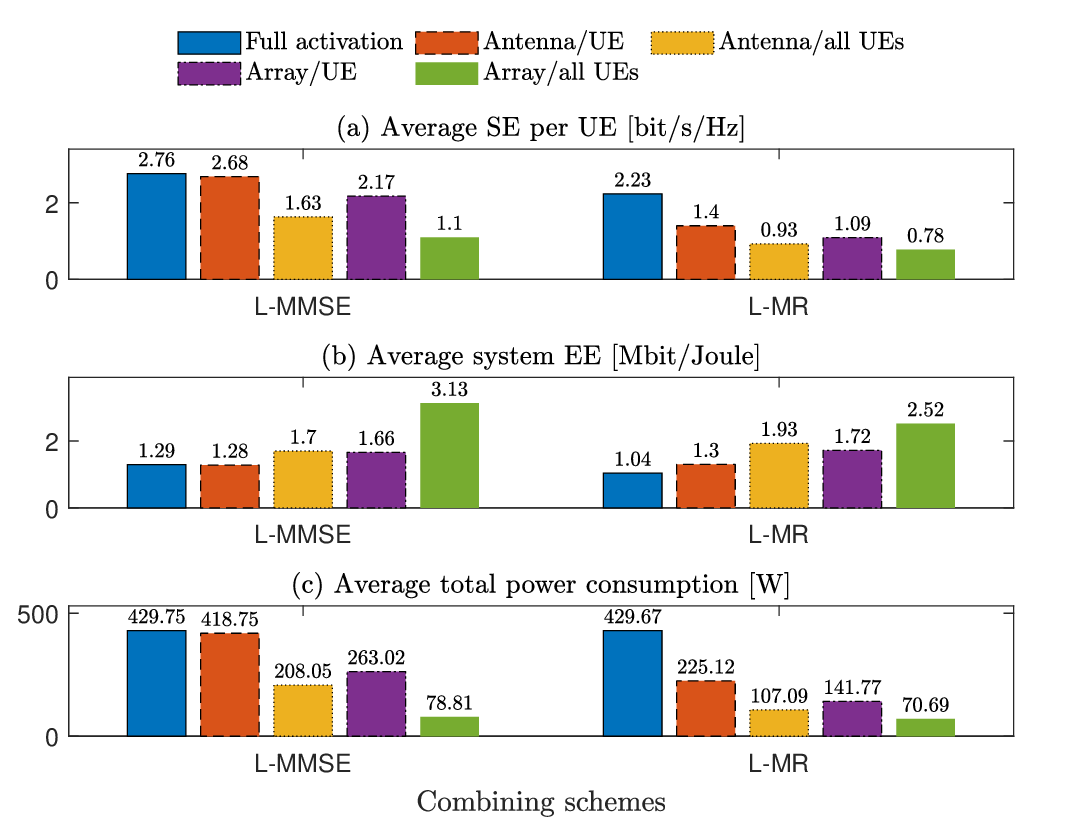}
\end{minipage}}
\subfigure[Antenna/array resources]{
\begin{minipage}{8cm}\centering
\includegraphics[width=0.9\columnwidth]{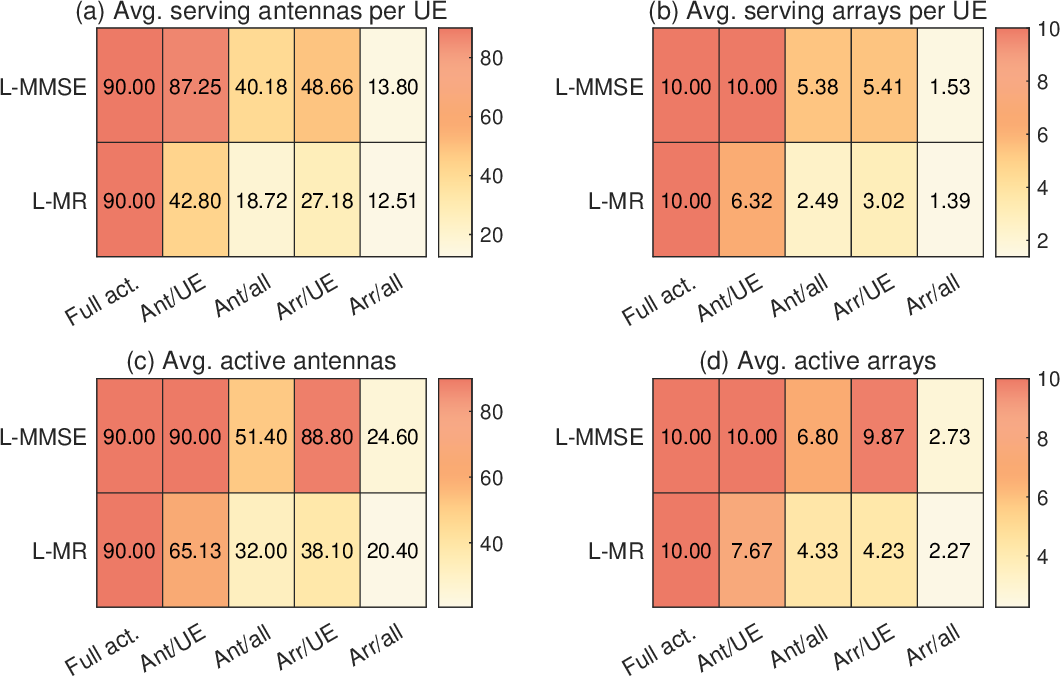}
\end{minipage}}
\caption{Comparison for achievable performance and activated antenna/array resources for all studied activation schemes with respective optimal sparsity ratios maximizing the EST metric over $M=10$, $K=10$, and $N=9$. ``Antenna/UE", ``Array/UE", ``Antenna/all UEs", and ``Array/all UEs" represent the antenna-level activation for particular UE, array-level activation for particular UE, antenna-level activation for all UEs, and array-level activation for all UEs, respectively. The terms in the subplot (b) are abbreviations for those in the subplot (a). \label{Fig_compare_all} 
\vspace{-0.15cm}}
\end{figure}

\vspace{-0.5cm}

\subsection{Array-Level Sparse Activation Evaluation}

Fig.~\ref{Fig_array_per} studies the average SE-EE trade-off per UE for the sparse array-level activation scheme for the particular UE in Sec.~\ref{array_per_UE}. We observe that, under the adopted power consumption model, the EE performance for this scheme is power-dominated for this array-level activation: increasing the sparsity ratio $\bar{\rho}$ reduces the total consumed power faster than the induced SE degradation so the energy efficiency keeps improving on the whole and the maximizer is pushed to the most aggressive sparsity, that is $\bar{\rho}=1$. Meanwhile, this array-level activation scheme achieves a more favorable SE-EE balance than the sparse antenna-level activation for the particular UE in Sec.~\ref{sec_antenna_per_UE}, especially under the L-MMSE combining. Along the SE axis, most L-MMSE points remain close to the full-activation benchmark, while the EE is significantly improved. For instance, $\bar{\rho}=0.5$ can achieve $21\%$ EE improvement compared to the benchmark with only $14\%$ SE loss. This favorable trade-off is because array-level sparsification with properly chosen $\bar{\rho}$ retains the most efficient array-UE links and thus preserves powerful interference suppression capability within the array, while the inactivated array-UE links substantially reduce the array- and fronthaul-related power consumption. For further comparison, we also include the \cite{9064545}-inspired AP array-UE pairing baseline, where the pilot-aware dynamic cooperation clustering rule in~\cite{9064545} is used to generate a scalable array-UE serving pattern while inactivating some AP-UE serving links. As observed, this baseline also improves the EE over the full-activation benchmark, which confirms that removing inefficient array-UE links is beneficial for reducing the power consumption and enhancing EE. However, its operating point still lies inside the trade-off region achieved by the proposed sparse array-level activation, indicating that the sparse optimization-based design can exploit the SE-EE trade-off more effectively than the pairing strategy in \cite{9064545}.

\vspace{-0.4cm}
\subsection{Overall Comparison and Tradeoff Interpretation}

% \begin{figure}[t]\centering
% \vspace{0.3cm}
% \subfigure[L-MMSE combining]{
% \begin{minipage}{8cm}\centering
% \includegraphics[width=0.82\columnwidth]{FIG_Power_Save_LMMSE.eps}
% \end{minipage}}\vspace{-0.15cm}
% \subfigure[L-MR combining]{
% \begin{minipage}{8cm}\centering
% \includegraphics[width=0.82\columnwidth]{FIG_Power_Save_LMR.eps}
% \end{minipage}}\vspace{-0.3cm}
% \caption{SE degradation between the sparse activation schemes and the full-activation scheme $\Delta\mathrm{SE}=\mathrm{SE}_{\mathrm{full}}-\mathrm{SE}_{\mathrm{sparse}}$ versus the power saving for the sparse activation schemes compared to the full-activation scheme $P_{\mathrm{save}}=P_{\mathrm{full}}-P_{\mathrm{sparse}}$ over $M=10$, $K=10$, and $N=9$. The curves are obtained by remapping the simulation results in Figs.~\ref{Fig_Antenna_per_UE}--\ref{Fig_array_all} into the $(P_{\mathrm{save}},\Delta\mathrm{SE})$ plane. For the UE-specific schemes, the original sweep points are used, whereas for the all-UEs schemes, the plotted curves are drawn from the corresponding Pareto boundary. For visual clarity, very closely spaced Pareto points are omitted and markers are shown only for a subset of the plotted points.  \label{Fig_power_save}
% \vspace{-0.15cm}}
% \end{figure}

\begin{figure}[t]
\centering
\includegraphics[width=0.8\columnwidth]{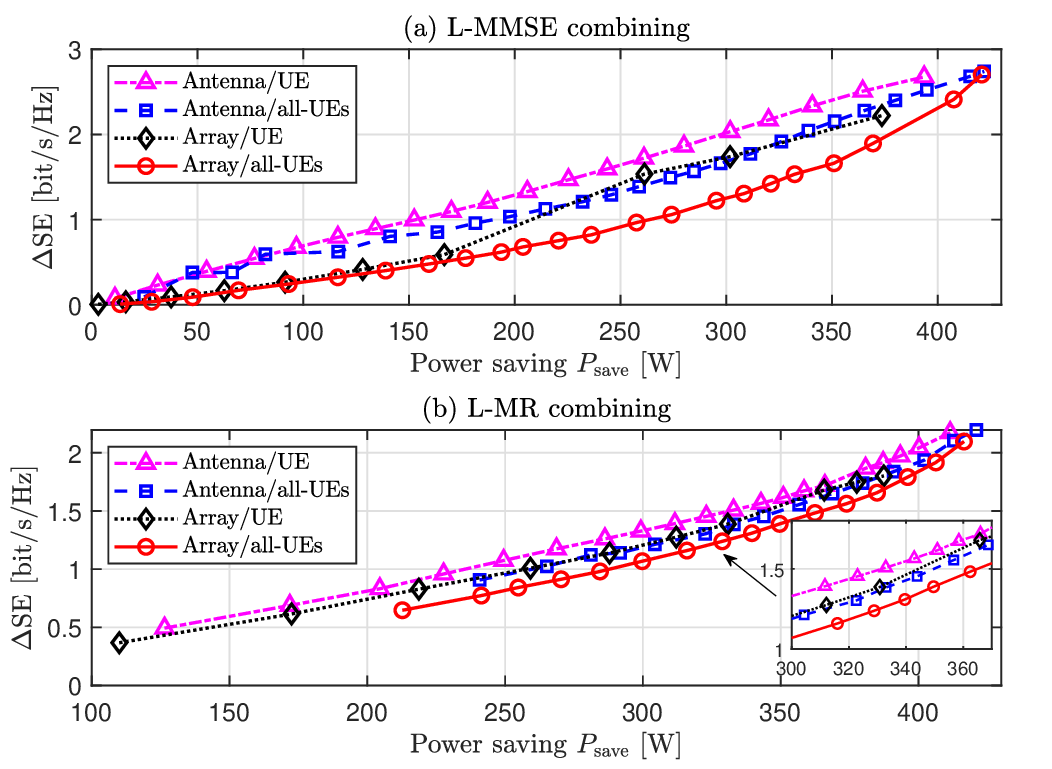}\vspace{-0.4cm}
\caption{SE degradation between the sparse activation schemes and the full-activation scheme $\Delta\mathrm{SE}=\mathrm{SE}_{\mathrm{full}}-\mathrm{SE}_{\mathrm{sparse}}$ versus the power saving for the sparse activation schemes compared to the full-activation scheme $P_{\mathrm{save}}=P_{\mathrm{full}}-P_{\mathrm{sparse}}$ over $M=10$, $K=10$, and $N=9$. The curves are obtained by remapping the simulation results in Figs.~\ref{Fig_Antenna_per_UE}--\ref{Fig_array_all} into the $(P_{\mathrm{save}},\Delta\mathrm{SE})$ plane. For the UE-specific schemes, the original sweep points are used, whereas for the all-UEs schemes, the plotted curves are drawn from the corresponding Pareto boundary. For visual clarity, very closely spaced Pareto points are omitted and markers are shown only for a subset of the plotted points.  \label{Fig_power_save}
\vspace{-0.1cm}}
\end{figure}

\begin{figure}[t]
\centering
\includegraphics[width=0.8\columnwidth]{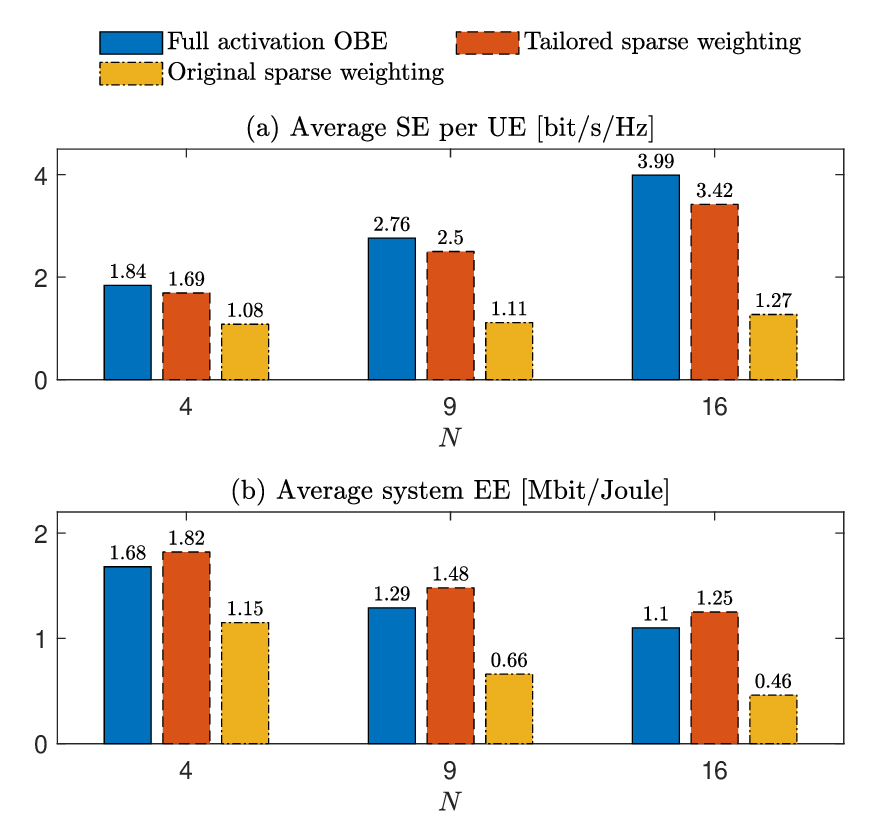}\vspace{-0.4cm}
\caption{Average SE  and average system EE for the array/UE activation scheme over different weighting schemes and different $N$ with $M=10$, $K=10$, and $\bar{\rho}=0.5$. The L-MMSE combining scheme is applied. ``Tailored sparse weighting" uses the antenna/array activation patterns inferred from the optimized sparse weighting vectors to prune the initial OBE vectors and ``original sparse weighting" directly employs the optimized sparse weighting vectors for performance evaluation.
\label{Fig_Tailor_DiffN}}
\vspace{-0.1cm}
\end{figure}

\begin{figure}[t]
\centering
\includegraphics[width=0.8\columnwidth]{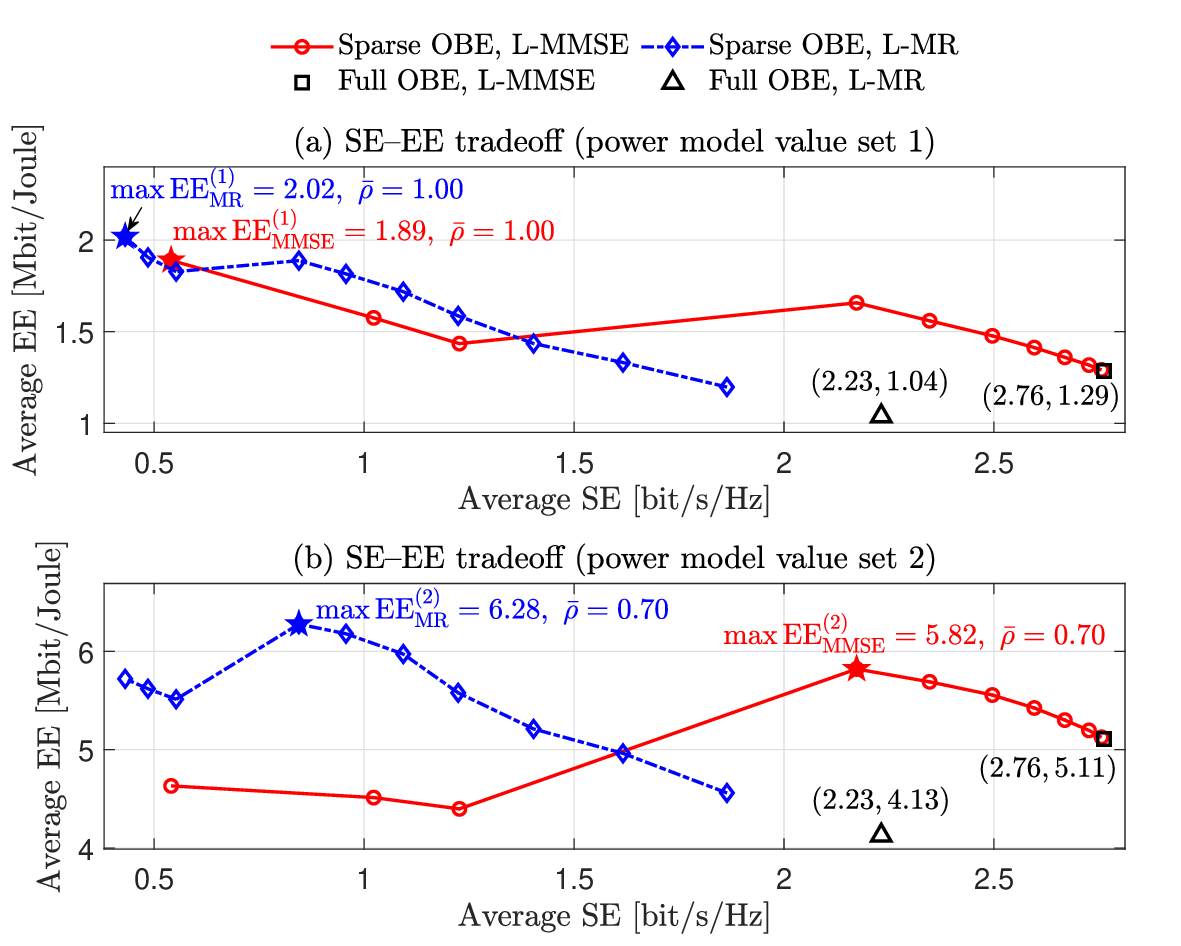}\vspace{-0.3cm}
\caption{Average SE and average system EE for the array/UE activation scheme over two different power consumption model value sets with $M=10$, $K=10$, and $N=9$. The value set $1$ denotes the set introduced in the beginning of this section and the value set $2$ denotes the set with $P_{k}^{\mathrm{ue},\mathrm{c}}=0.05\, \mathrm{W}$, $p_{k,p}=p_k=0.2 \,\mathrm{W}$, $\eta _{\mathrm{ue}}=0.6$,  $P_{m}^{\mathrm{array}}=2 \, \mathrm{W}$, $P_{mn}^{\mathrm{ant},\mathrm{c}}=0.1\, \mathrm{W}$, $P_{mnk}^{\mathrm{pro}}=0.08 \,\mathrm{W}$, $P_{m}^{\mathrm{fh},\mathrm{fix}}=0.2 \, \mathrm{W}$, $P_{mk}^{\mathrm{sig}}=2\, \mathrm{mW}$, $P^{\mathrm{cpu},\mathrm{fix}}=1 \,\mathrm{W}$, and $P^{\mathrm{cpu},\mathrm{dec}}=0.16 \, \mathrm{W}/\left( \mathrm{Gbit}/\mathrm{s} \right) $. \label{Fig_Power_Model}}
\vspace{-0.3cm}
\end{figure}

\begin{figure}[t]
\centering
\includegraphics[width=0.8\columnwidth]{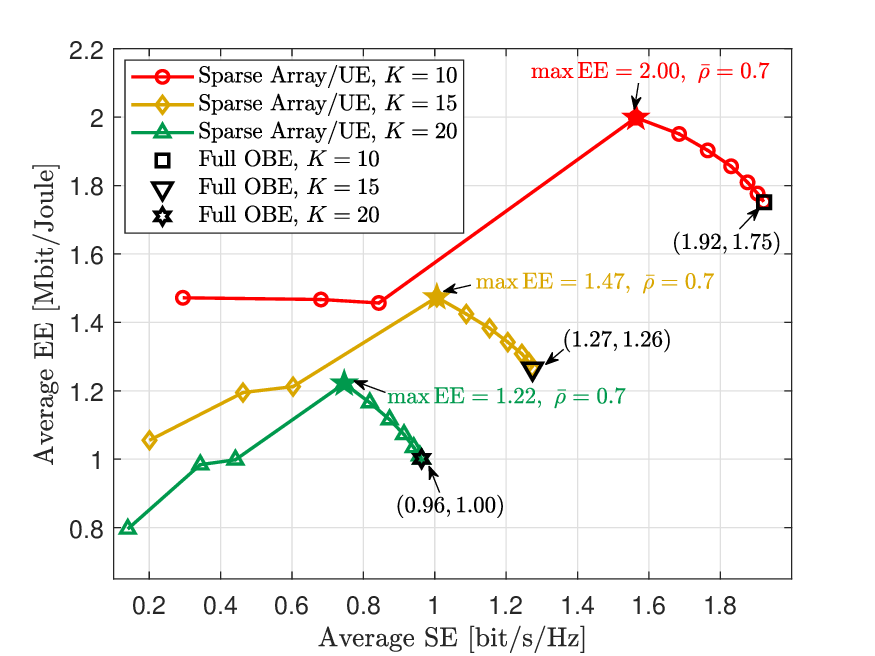}\vspace{-0.3cm}
\caption{Average SE and EE for the array/UE activation scheme with the L-MMSE combining applied over different $K$ with $M=10$ and $N=4$. \label{Fig_Diff_K}}
\vspace{-0.2cm}
\end{figure}

Fig.~\ref{Fig_array_all} illustrates the average SE per UE and average system EE for the sparse array-level activation scheme for all UEs in Sec.~\ref{array_all_UE}. As observed, the average EE reaches its maximum in the high-sparsity region, indicating that the power saving induced by array-level inactivation for this scheme is sufficiently significant to dominate the SE loss under moderate-to-strong sparsity. Meanwhile, the maximized EE values for both combining schemes are the highest among the maximum values in all considered activation schemes, which demonstrates that this array-level sparsification provides the most effective power consumption reduction and EE enhancement. Similar to the observation to Fig.~\ref{Fig_antenna_all}, we observe that the SE and EE variations are considerably more sensitive to the leaf sparsity ratio $ \bar{\rho} $, because leaf sparse factors $ \bar{\rho} $ or $ \bar{\lambda} $ controls the UE-specific antenna/array associations at a much finer granularity than the root sparse factor $ \bar{\omega} $, and thus reacts more strongly to changes in the sparsity level.

In Fig.~\ref{Fig_compare_all}, we compare all sparse activation schemes under a unified EE-SE tradeoff (EST) metric. Since the sparse factors that maximize SE and EE differ markedly across schemes, for comparison fairness and SE-EE balance, we adopt the EST measure from \cite[Eq. (10)]{6365872} with SE preference weight $\alpha = 0.3$, corresponding to an EE-oriented operating point. For each scheme, SE and EE are first normalized by their own maxima over all sparse factors, and the sparsity level that maximizes EST is then selected as the operating point used for comparison. The results show that, except for the conservative antenna/UE activation with the L-MMSE, all sparse schemes yield a substantial EE improvement over the full-activation benchmark. The antenna/UE scheme with the L-MMSE attains the highest SE among all sparse schemes and remains very close to the full-activation SE, while only slightly reducing the total power consumption. By contrast, the array/all UEs scheme achieves the largest EE and the minimum total power at the expense of a substantial SE loss. Moreover, the array/UE and antenna/all UEs schemes offer an intermediate and more balanced SE-EE tradeoff. As observed from Fig.~\ref{Fig_compare_all}(b), the antenna/UE scheme is the most conservative in terms of resource deactivation, retaining the largest number of active antennas/arrays, whereas the antenna/all UEs and array/all UEs schemes shut down considerably more resources than the UE-specific schemes. Moreover, for any given activation scheme, the L-MR combining keeps fewer active resources at its EST-optimal point than the L-MMSE, indicating that the L-MR is more sensitive to the introduction of sparsity.

Fig.~\ref{Fig_power_save} illustrates the SE degradation versus the power saving for the proposed sparse activation schemes. As observed, all schemes exhibit a clear SE--power-saving tradeoff, that is, a larger power saving is generally achieved at the expense of a larger SE loss. Among the four schemes, the array/all-UEs scheme attains the most favorable tradeoff boundary, since it consistently yields the smallest SE degradation for a given power saving, or equivalently achieves a larger power saving for the same SE loss. By contrast, the antenna/UE scheme is overall the least efficient, indicating that overly fine-grained activation leads to a less favorable SE--power-saving balance. Meanwhile, the array/UE scheme and the antenna/all UEs scheme exhibit rather comparable performance over the considered range, suggesting that both the activation granularity and the coordination scope play important roles in shaping the final tradeoff. Overall, these results show that coordinating the sparse activation across all UEs is generally more effective than designing it separately for each UE, and that array-level activation provides a more favorable SE--power-saving tradeoff than antenna-level activation.

\vspace{-0.3cm}
\subsection{Impact of System and Modeling Settings}

Fig.~\ref{Fig_Tailor_DiffN} investigates the impact of $N$ and the choice of sparse weighting schemes on the achievable performance for the sparse activation, where 
the array/UE scheme is taken as an example. We observe that the tailored sparse weighting consistently outperforms the original sparse weighting in both the SE and EE performance for all considered values of $N$ and the performance gap between the original one and the tailored one becomes large as $N$ increases, demonstrating the necessity of utilizing this tailored sparse weighting strategy when evaluating the performance. Since the two sparse schemes share almost identical power consumption except for the throughput–dependent CPU power, the EE degradation of the original sparse weighting mainly originates from its poor SE performance. These observations are because, in the original scheme, all entries of the optimized sparse weighting vectors are disturbed, including those that remain nonzero, whereas the tailored scheme only uses the extracted antenna/array activation modes to prune the initial OBE weights and preserves the remaining coefficients. Moreover, the SE improvement from more antennas is accompanied by the gradual decrease in EE due to the significantly growing power consumption, and the tailored sparse activation involves robustness as $N$ varies.

Fig.~\ref{Fig_Power_Model} investigates how different power consumption value sets affect the sparse activation scheme, again using the array/UE scheme as an example. The first set corresponds to the baseline values used throughout this paper, whereas the second set is obtained by reducing the hardware-related power parameters by a factor of two and the computation-related parameters by a factor of five, following the guideline in \cite{8187178}, so that value set~2 represents a more lightweight architecture. The results show that the proposed sparse activation scheme can substantially improve EE under both value sets. However, with value set~2 in Fig.~\ref{Fig_Power_Model}.(b), the SE-EE curves become clearly unimodal and the EE optimum shifts to a moderate sparsity level around $\bar{\rho}=0.7$, which contrasts with the behavior under value set~1, where the EE keeps increasing toward the most aggressive sparsity. These observations indicate that both the achievable EE and the optimal sparsity level of the sparse activation scheme are highly sensitive to the adopted power consumption model, and the value set should therefore be flexibly chosen according to realistic hardware parameters or specific design requirements, such as hardware power-dominant or processing power-dominant scenarios.

Fig.~\ref{Fig_Diff_K} investigates the impact of the user load on the sparse activation scheme. We take the array-level activation scheme for the particular UE as an example. As $K$ increases, both the full-activation benchmark and the sparse trade-off curves shift to the lower-left region, indicating that a heavier user load reduces both the average SE per UE and the system EE. Nevertheless, the proposed sparse array/UE activation remains effective over the whole tested load range. In particular, the EE-maximizing operating point consistently outperforms the full-activation OBE benchmark in terms of EE, e.g. $14.3\%$, $16.7\%$, and $22.0\%$ EE improvements for $K=10$, $15$, and $20$, respectively, at the expense of about $20\%$ SE reductions. These results show that sparse activation is not limited to very low user loads. Even at higher loads, deactivating less important links remains effective in reducing the power consumption and enhancing the EE while preserving a reasonable SE level.

\vspace{-0.15cm}

\section{Conclusions}
\vspace{-0.15cm}
The sparse antenna/array activation problems in uplink CF mMIMO networks were investigated. An antenna-level OBE weighting framework was developed, where each AP-UE pair was assigned a matrix-valued long-term weight to shape the contributions of individual antenna elements.
Based on this framework, the sparse activation problem was constructed as a structured sparsity-inducing MSE minimization problem, yielding four activation schemes at antenna-level and array-level, each with UE-specific and network-wide variants. The resulting convex formulations were solved using the proximal method with closed-form group-wise updates. Notably, the network-wide designs were characterized by hierarchical sparsity and handled via a tree-structured proximal operator.

Through this paper, we developed insights into the studied sparse activation schemes. \emph{First}, if the primary objective is to maximize EE or minimize the total power consumption even at the expense of SE loss, the array/all UEs scheme is the most suitable choice. \emph{Second}, the antenna/all UEs and array/UE schemes provide a more balanced SE-EE tradeoff, achieving considerable EE gains while maintaining a moderate SE degradation. 
\emph{Third}, the antenna/array activation schemes for all UEs are more effective in inactivating antenna/array resources than the UE-specific activation schemes, and the leaf sparse factors are more dominant in affecting performance than the root factors. \emph{Fourth}, for any given activation scheme, the L-MR combining is more sensitive to the introduction of sparsity than L-MMSE, allowing more antenna/array resources to be switched off while achieving a higher EE. 
\emph{Fifth}, the implementation of sparse activation schemes is highly dependent on the choice of value set in the power consumption model, which should be highlighted based on specific research requirements.

% \begin{figure}[t]
% \centering
% \includegraphics[width=0.82\columnwidth]{FIG_OBE_M.eps}
% \caption{xxxxx. \label{xxxx}}
% \end{figure}

\begin{appendices}
\vspace{-0.3cm}
\section{Proof of Proposition~\ref{prop_obe_weighting}}\label{app_obe_weighting}
We begin from the numerator of \eqref{SINR_k}. We can formulate $\sum_{m=1}^M{\mathbb{E} \{\mathbf{v}_{mk}^{H}\mathbf{h}_{mk}\}}\overset{\left( a \right)}{=}\sum_{m=1}^M{\mathbf{w}_{mk}^{H}\mathrm{vec}\left( \mathbb{E} \{\mathbf{h}_{mk}\overline{\mathbf{v}}_{mk}^{H}\} \right)}=\mathbf{w}_{k}^{H}\mathbb{E} \{\mathbf{g}_k\}$, where step (a) follows from the standard results as \cite[Eq. (32)]{OBETrans} and \cite[Eq. (34)]{OBETrans}.
% Meanwhile, we have $\mathbf{w}_k=\left[ \mathbf{w}_{1k}^{T},\mathbf{w}_{2k}^{T},\dots ,\mathbf{w}_{Mk}^{T} \right] ^T\in \mathbb{C} ^{MN^2}$ and $\mathbf{g}_k=[\mathrm{vec(}\mathbf{h}_{1k}\overline{\mathbf{v}}_{1k}^{H})^T,\dots ,\mathrm{vec(}\mathbf{h}_{Mk}\overline{\mathbf{v}}_{Mk}^{H})^T]^T\in \mathbb{C} ^{MN^2}$. 
As for $\mathbb{E} \{|\sum_{m=1}^M{\mathbf{v}_{mk}^{H}\mathbf{h}_{ml}|^2\}}$, we have $\mathbb{E} \{|\sum\nolimits_{m=1}^M{\mathbf{v}_{mk}^{H}\mathbf{h}_{ml}|^2\}}\overset{\left( a \right)}{=}\mathbb{E} \{|\sum\nolimits_{m=1}^M{\mathrm{tr}(\mathbf{W}_{mk}^{H}\mathbf{h}_{ml}\overline{\mathbf{v}}_{mk}^{H})|^2\}}\overset{\left( b \right)}{=}\mathbb{E} \{|\sum\nolimits_{m=1}^M{\mathbf{w}_{mk}^{H}\mathrm{vec(}\mathbf{h}_{ml}\overline{\mathbf{v}}_{mk}^{H})|^2\}}=\mathbb{E} \{|\mathbf{w}_{k}^{H}\mathbf{z}_{kl}|^2\}=\mathbf{w}_{k}^{H}\mathbb{E} \{\mathbf{z}_{kl}\mathbf{z}_{kl}^{H}\}\mathbf{w}_k$,
where step (a) and step (b) follow from \cite[Eq. (32)]{OBETrans} and \cite[Eq. (34)]{OBETrans}, respectively. Finally, we can construct $\sum_{m=1}^M{\mathbb{E} \{\parallel \mathbf{v}_{mk}\parallel ^2\}=\sum_{m=1}^M{\mathrm{tr(}\mathbf{W}_{mk}^{H}\mathbf{W}_{mk}\mathbb{E} \{\overline{\mathbf{v}}_{mk}\overline{\mathbf{v}}_{mk}^{H}\})}}\overset{\left( a \right)}{=}\sum\nolimits_{m=1}^M{\mathbf{w}_{mk}^{H}(\mathbb{E} \{\overline{\mathbf{v}}_{mk}\overline{\mathbf{v}}_{mk}^{H}\}^T\otimes \mathbf{I}_N)\mathbf{w}_{mk}}=\mathbf{w}_{k}^{H}\mathbf{\Theta }_k\mathbf{w}_k$, where step (a) follows from \cite[Eq. (36)]{OBETrans}. Based on the above results, we can construct $\mathrm{SINR}_k$ in \eqref{SINR_k} as
\vspace{-0.55em}\begin{equation}\label{SINR_Construct}
\begin{aligned}
&\mathrm{SINR}_k=\\
&\frac{p_k|\mathbf{w}_{k}^{H}\mathbb{E} \{\mathbf{g}_k\}|^2}{\mathbf{w}_{k}^{H}(\sum_{l=1}^K{p_l\mathbb{E} \{\mathbf{z}_{kl}\mathbf{z}_{kl}^{H}\}-p_k\mathbb{E} \{\mathbf{g}_k\}\mathbb{E} \{\mathbf{g}_k\}^H+\sigma ^2\mathbf{\Theta }_k)\mathbf{w}_k}}.
\end{aligned}
\vspace{-0.3em}\end{equation}
We can observe that \eqref{SINR_Construct} is a Rayleigh quotient with respect to $\mathbf{w}_k$, and thus, following \cite[Lemma B.10]{8187178} and steps as \cite[C.3.2]{8187178}, we derive the optimal $\mathbf{w}_k$ as \eqref{opt_weighting}.

\vspace{-0.3cm}
\section{Proof of MSE Minimization by OBE Weighting}\label{MSE_Minimization}
When applying the weighted local decoding scheme defined in \eqref{local_decoding_scheme}, we can easily construct the final decoding data in \eqref{decoding_signal} as $\check{x}_k=\mathbf{w}_{k}^{H}\mathbf{g}_kx_k+\sum_{l\ne k}^K{\mathbf{w}_{k}^{H}\mathbf{z}_{kl}x_l}+\sum_{m=1}^M{\overline{\mathbf{v}}_{mk}^{H}\mathbf{W}_{mk}^{H}\mathbf{n}_m}$, where the mentioned variables are defined in Proposition~\ref{prop_obe_weighting}. Then, we can easily derive
$\mathrm{MSE}_k=\mathbb{E} \{ | x_k-\check{x}_k |^2\} =p_k-p_k\mathbb{E} \{ \mathbf{g}_{k}^{H} \} \mathbf{w}_k-p_k\mathbf{w}_{k}^{H}\mathbb{E} \{ \mathbf{g}_k \} +
\mathbf{w}_{k}^{H}( \sum_{l=1}^K{p_l\mathbb{E} \{ \mathbf{z}_{kl}\mathbf{z}_{kl}^{H} \}} ) \mathbf{w}_k+\sigma ^2\mathbf{w}_{k}^{H}\mathbf{\Theta }_k\mathbf{w}_k$ by utilizing the random characteristics of $\{x_k:\forall k\}$. Then, by computing the partial derivation of $\mathrm{MSE}_k$ with respect to $\mathbf{w}_{k}^{*}$, we derive the optimal scheme, minimizing $\mathrm{MSE}_k$, as $\mathbf{w}_k=p_k( \sum_{l=1}^K{p_l\mathbb{E} \{ \mathbf{z}_{kl}\mathbf{z}_{kl}^{H} \}}+\sigma ^2\mathbf{\Theta }_k) ^{-1}\mathbb{E} \{ \mathbf{g}_k \} $, which is same to the OBE weighting scheme as in \eqref{opt_weighting}.

\end{appendices}

\vspace{-0.3cm}
\bibliographystyle{IEEEtran}
\bibliography{IEEEabrv,Ref}

\end{document}